\newcommand\todo[1]{}
\tikzstyle{background}=[rectangle,fill=gray!10, inner sep=0.1cm, rounded corners=0mm]
\tikzstyle{background}=[rectangle,fill=gray!10, inner sep=0.1cm, rounded corners=0mm]
\tikzstyle{loc}=[draw,rectangle,minimum size=1.4em,inner sep=0em]
\tikzstyle{trans}=[-latex, rounded corners]
\tikzstyle{trans2}=[-latex, dashed, rounded corners]
\newcommand{\PSPACEC}{\textsc{Pspace-Complete}}
\newcommand{\PSPACE}{\textsc{Pspace}}
\newcommand{\NPC}{\textsc{NP-Complete}}
\newcommand{\NLC}{\textsc{NL-Complete}}
\newcommand{\NLOGS}{\textsc{Nlogspace}}
\newcommand{\UNDEC}{\textsc{Undecidable}}
\newcommand{\PTIMEC}{\textsc{Ptime-Complete}}
\newcommand{\set}[1]{\left\{ #1 \right\}}
\newcommand{\seq}[1]{\langle #1 \rangle}
\newcommand{\Nat}{\mathbb N}
\newcommand{\R}{\mathbb R}
\newcommand{\Real}{\R}
\newcommand{\Int}{\mathbb Z}
\newcommand{\Rat}{\mathbb Q}
\newcommand{\Rplus}{\R_{\geq 0}}
\newcommand{\E}{{\bf E}}
\newcommand{\A}{{\bf A}}
\newcommand{\norm}[1]{\|#1\|}
\newcommand{\point}[1]{{\overline{#1}}}
\newcommand{\pt}{\point{t}}
\newcommand{\px}{\point{x}}
\newcommand{\py}{\point{y}}
\newcommand{\vv}{\vec{v}}
\newcommand{\vb}{\vec{b}}
\newcommand{\vr}{\vec{r}}
\newcommand{\vzero}{\vec{0}}
\newcommand{\sem}[1]{ [ \! [ {#1}  ]  \! ]} 
\newcommand{\Aa}{\mathcal{A}}
\newcommand{\Hh}{\mathcal{H}}
\newcommand{\Pp}{\mathcal{P}}
\newcommand{\Gg}{\mathcal{G}}
\newcommand{\N}{\mathbb{N}}
\newcommand{\Rr}{\mathcal{R}}
\newcommand{\Tt}{\mathcal{T}}
\newcommand{\poly}{\textrm{poly}}
\newcommand{\bB}{\mathbb{B}}
\newcommand{\Trace}{\mathsf{\it Trace}}
\newcommand{\until}{\:\mathcal{U}}
\begin{document}

\title{Weak Singular Hybrid Automata~\thanks{This work was partly supported by
    IRCC project Spons/CS/I12155-1/2013.}}

\author{ 
	Shankara Narayanan Krishna
	\and Umang Mathur
	\and Ashutosh Trivedi
  }
\institute{Department of Computer Science and Engineering\\
	Indian Institute of Technology - Bombay\\
	Mumbai 400076, India
}

\maketitle

\begin{abstract}
  The framework of Hybrid automata---introduced by Alur, Courcourbetis,
  Henzinger, and Ho---provides a formal modeling and analysis environment to
  analyze the interaction between the discrete and the continuous parts of
  hybrid systems. 
  Hybrid automata can be considered as generalizations of finite state automata
  augmented with a finite set of real-valued variables whose dynamics in each
  state is governed by a system of ordinary differential equations. 
  Moreover, the discrete transitions of hybrid automata are guarded by
  constraints over the values of these real-valued variables, and enable
  discontinuous jumps in the evolution of these variables.
  Singular hybrid automata are a subclass of hybrid automata where dynamics is
  specified by state-dependent constant vectors.
  Henzinger, Kopke, Puri, and Varaiya showed that for even very restricted
  subclasses of singular hybrid automata, the fundamental verification questions,
  like reachability and schedulability, are undecidable. 
  Recently, Alur, Wojtczak, and Trivedi studied an interesting class of hybrid
  systems, called constant-rate multi-mode systems, where schedulability and
  reachability analysis can be performed in polynomial time. 
  Inspired by the definition of constant-rate multi-mode systems, in this paper
  we introduce   \emph{weak singular hybrid automata} (WSHA), a previously
  unexplored subclass of singular hybrid automata, and show the decidability
  (and the exact complexity) of various verification questions for this class
  including reachability (\NPC{}) and LTL model-checking (\PSPACEC{}).
  We further show that extending WSHA with a single unrestricted clock or
  with unrestricted variable updates lead to undecidability of
  reachability problem. 
\end{abstract} 

\section{Introduction}
\label{sec:introduction}
Hybrid automata, introduced by Alur et al.~\cite{ACHH92}, provide an intuitive
and semantically unambiguous way to model hybrid systems. Various
verification questions for such systems can then be naturally reduced to
corresponding questions for hybrid automata. 
Hybrid automata can be considered as finite state-transition
graphs with a finite set of real-valued variables with state-dependent dynamics
specified using a set of first-order ordinary differential equations. 
The variables of hybrid automata can be used to constrain the evolution of the 
system by means of \emph{guards} of the transitions and \emph{local invariants}
of the states of the state-transition graph.
The variables can also be reset at the time of taking a transition and thus
allowing discrete jumps in the evolution of the system.  
Considering the richness of the dynamics of hybrid automata, it should come as
no surprise that key verification questions, like state reachability,  are
undecidable for hybrid automata limiting the applicability of hybrid automata
for automatic verification of hybrid systems.
Henzinger et al.~\cite{HKPV98,HK99} observed that this negative result stays
even for a severely restricted subclass of hybrid automata, called the \emph{singular
hybrid automata} (SHA), where the variables dynamics is specified as state-dependent
constant-rate vectors and showed that the reachability problem stays
undecidable for singular hybrid automata with three clocks (unit-rate variables)
and one non-clock variable.   
In this paper we introduce a weak version of singular hybrid automata, and show
the decidability (and the exact complexity) of reachability, schedulability, and
LTL model-checking problems for this class.

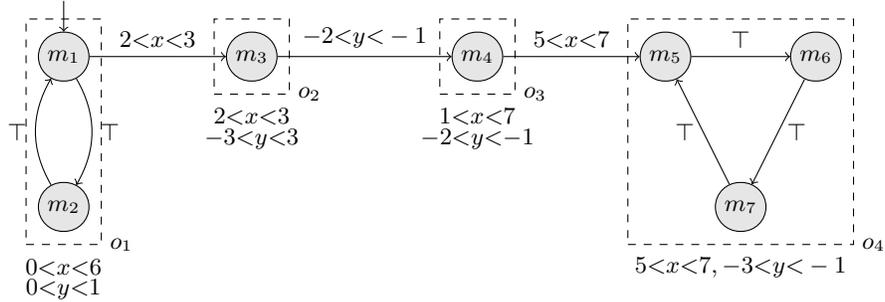
\begin{figure}[t]
  \begin{center}
    \begin{tikzpicture}
      \tikzstyle{every state}=[fill=gray!20!white,minimum size=2em,shape=rounded rectangle]

      \node[initial, initial where=above, initial text={},state,fill=gray!20] (m1) {$m_1$};
      \node[state,fill=gray!20] at (0, -2) (m2) {$m_2$};
      \draw [dashed] (-0.5, 0.5) rectangle (0.5, -2.5) node[right] {$o_1$};
      \node at (0, -2.8) {$0{<} x {<} 6$};
      \node at (0, -3.1) {$0 {<} y {<} 1$};

      \node[state, fill=gray!20] at (2.5,0) (m3) {$m_3$} ;
      \draw [dashed] (2, 0.5) rectangle (3, -0.5) node[right]{$o_2$} ;
      \node at (2.5, -0.8) {$2{<} x {<} 3$};
      \node at (2.5, -1.1) {$-3 {<} y {<} 3$};

      \node[state, fill=gray!20] at (5.5,0) (m4) {$m_4$} ;
      \draw [dashed] (5, 0.5) rectangle (6, -0.5) node[right]{$o_3$};
      \node at (5.5, -0.8) {$1{<} x {<} 7$};
      \node at(5.5, -1.1) {$-2 {<} y {<} {-}1$};

      \node[state, fill=gray!20] at (8,0) (m5) {$m_5$} ;
      \node[state, fill=gray!20] at (10,0) (m6) {$m_6$} ;
      \node[state, fill=gray!20] at (9,-2) (m7) {$m_7$} ;
      \draw [dashed] (7.5, 0.5) rectangle (10.5, -2.5) node[right]{$o_4$};
      \node at (9, -2.8) {$5 {<} x {<} 7, -3 {<} y {<} -1$};

     \path[->] (m1) edge[bend left] node [right] {$\top$}  (m2);
     \path[->] (m2) edge[bend left] node [left] {$\top$}  (m1);

     \path[->] (m1) edge node [above] {$2 {<} x {<} 3$}  (m3);
     \path[->] (m3) edge node [above] {$-2 {<} y {<} -1$}  (m4);

     \path[->] (m4) edge node [above] {$5 {<} x {<} 7$}  (m5);

    \path[->] (m5) edge node [above] {$\top$}  (m6);
    \path[->] (m6) edge node [right] {$\top$}  (m7);
    \path[->] (m7) edge node [left] {$\top$}  (m5);
    \end{tikzpicture}
  \end{center}
\caption{A weak singular hybrid automaton}
\label{fig:wsha}
\end{figure}
Our definition of weak singular hybrid automata is inspired by the
definition of constant-rate multi-mode systems (CMS)~\cite{ATW12}, that are
hybrid systems that can switch freely between a finite set of modes (or states)
and whose dynamics are specified by a finite set of variables with mode-dependent
constant rates.
The schedulability problem for CMS is to decide---for a given initial state and
convex and bounded safety set---whether there exists a non-Zeno mode-switching
schedule such that the system stays within the safety set.
On the other hand, the reachability problem is to decide whether there is a
schedule that steers the system from a given initial configuration to the target configuration
while staying within a specified bounded and convex safety set.  
Since the system is allowed to switch freely between the enabled modes, the
reachability and schedulability problems can be solved in polynomial
time~\cite{ATW12} by reducing them to a linear program.   
We say that a singular hybrid automaton is \emph{weak} if there exists an
ordering among the states such that the transition to a lower order state is
disallowed, and the states with the same ordering form a CMS, i.e. such states have a
common invariant and vacuous guards on transitions among themselves.

\begin{figure}
\begin{center}
  \begin{tikzpicture}
    \draw [fill=gray!20, opacity=0.8] (0,0) rectangle (6,1);
    \node at (3.5, 0.5) {$o_1$};
    \node[fill=white!80, circle,inner sep=0.2em, draw] at (0.5, 0.5) {${\color{black}s_0}$};
    \draw[->] (4.2, 0.2) -- (5.2, 0.8) node[right] {$m_1$};
    \draw[->] (4.2, 0.2) -- (5.2, 0.2) node[right] {$m_2$};
    
    \draw [fill=gray!30, opacity=0.8] (2,2) rectangle (3,-3);
    \node at (2.5, -2.8) {$o_2$};
    \draw[->] (2.5, 1.8) -- (2.5, 0.8) node[below] {$m_3$};
    
    \draw [fill=gray!40, opacity=0.8] (1, -1) rectangle (7, -2);
    \node at (3.8, -1.5) {$o_3$};
    \draw[->] (1.2, -1.5) -- (2.2, -1.5) node[right] {$m_4$};
    
    \draw [fill=gray!50, opacity=0.8] (5, -1) rectangle (7, -3);
    \node at (6.8, -2) {$o_4$};
    \draw[->] (6, -2) -- (6.5, -2.6) node[below] {$m_5$};
    \draw[->] (6, -2) -- (5.5, -2.6) node[below] {$m_6$};
    \draw[->] (6, -2) -- (6, -1.5) node[above] {$m_7$};
    \node[fill=white,circle,inner sep=0.2em, draw] at (5.4, -1.8)
    {${\color{black}s_T}$};

  \end{tikzpicture}                        
\end{center}
  \caption{Multi-mode system corresponding to a robotic motion planning problem}
  \label{fig:robocop}
\end{figure}
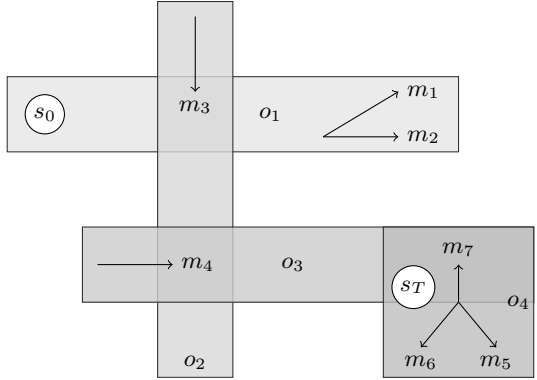
WSHAs are a natural generalization of CMS with structure, and can be used to model
CMS with non-convex safety set.  
As an example of a WSHA, consider the two dimensional robotic motion planning
problem shown in the Figure~\ref{fig:robocop}, where the arena is a nonconvex region
given as union of four convex polytopes $o_1, o_2, o_3$ and $o_4$. 
The possible motion primitives, or modes, in each region are shown as vectors 
showing the direction the robot will move given the corresponding mode is chosen. 
Consider the following reachability and schedulability problems for this
example: given an initial valuation $s_0$ decide if it is possible to compose the motion
primitives available in a given valuation so as to reach the final state $s_T$,
while the schedulability problem is to decide if there is a non-Zeno composition
of motion specifications such that the robot stays in the safety set forever.   
This problem can not be solved using the results for constant-rate multi-mode
systems sue to non-convexity of  the safety set.
On the other hand, it is easy to see that the reachability and the
schedulability problems for this system can be reduced to corresponding problems on
the weak singular hybrid automaton shown in Figure~\ref{fig:wsha}, where  modes
with the same order are shown inside a dashed box with global invariant  is
specified just below the box. 

We extend the results of~\cite{ATW12} by recovering decidability
for WSHA by showing that the reachability problem and
schedulability problems are \NPC{} for this model. 
We also define LTL and CTL model-checking problems for weak singular hybrid
automata, and show that while the complexity of LTL model-checking stays the
same as LTL model checking for finite state-transition graphs (\PSPACEC{}), the
CTL model-checking is already PSPACE-hard.
Inspired by an unpublished result from Bouyer and Markey~\cite{markey-HDR11}, we
show (Section~\ref{sec:undec}) that extending WSHA with single unrestricted
clock variable make the reachability problem undecidable for WSHA with three
variables.  
In the same section, we also show that extending WSHA with unrestricted variable
updates also make the reachability problem undecidable for WSHA with three
variables.  
Table~\ref{table:results} shows a summary of results on singular hybrid
automata, and the contributions of this paper are highlighted with boldface.  
\begin{center}
\begin{table}[h]
  \begin{tabular}{ p {3cm}   p {4.8 cm}  p {4cm} }
    \hline
    Problem  & SHA & WSHA \\
    \hline        
    Reachability  & \UNDEC{} (${\geq} 3$ vars{.})~\cite{HKPV98} \newline
    \textbf{{\NLC{} ($1$ var.)}} &  \textbf{{\NPC}}\\
    \hline
    Schedulability  & \UNDEC{} (${\geq} 3$ vars{.})~\cite{HKPV98} \newline \textbf{{\NLC{} ($1$ var.)}} &
    \textbf{{\NPC{}}} \\
    \hline
    LTL model-checking & \UNDEC{} (${\geq} 3$ vars{.})~\cite{HKPV98} \newline  \textbf{{\PSPACEC{} ($1$ var{.})}} &
    \textbf{{\PSPACEC{}}}\\
    \hline
    CTL model-checking & \textbf{{\UNDEC{} (${\geq} 2$ vars{.} )}} \newline \textbf{{\PTIMEC{} ($1$ var{.})}} &
    \textbf{{\PSPACE-Hard{}  (${\geq} 2$ vars. )}} \newline
    \textbf{{\PTIMEC{} ($1$ var.)}} \\
    \hline
  \end{tabular}
\vspace{1em}
 \caption{Summary of decidability results related to (weak) singular
    hybrid automata}
  \label{table:results}
\vspace{-2em}
\end{table}
\end{center}

\noindent{\bf Related work}.  
 Timed automata are subclasses of SHA with the restriction that all variables
 are clocks, while stopwatch automata are subclasses of hybrid
 automata with the  restriction  that all variables are stopwatches (clocks that
 can be paused).
 Using the region construction~\cite{AD94} Alur and Dill showed that the
 reachability and the schedulability problems for timed automata are 
 decidable and are in fact complete for PSPACE.
 Henzinger et al.~\cite{HKPV98} showed that the undecidability result for
 singular hybrid automata holds even for stopwatch automata. 
 Initialized singular hybrid automata are subclasses of singular hybrid
 automata with the restriction that if there is a transition between two
 modes that have different rate for some variable then that transition must
 reset that variable.
 Henzinger at al.~\cite{HKPV98} showed the decidability of reachability problem
 by reducing the problem to the corresponding problem on timed
 automata---by appropriate adjustment of the guards of the transitions. 
 Unlike timed automata and initialized SHA, our results for WSHA do not rely on the
 existence of finitary  bisimulation.   

Asarin, Maler, and Pnueli~\cite{AMP95} studied a subclass of singular hybrid
automata, called the piecewise-constant derivative (PCD) systems, that are
defined by a partition of the Euclidean space into a finite set of
polyhedral regions, where the dynamics in a region is defined by a constant rate
vector.   
PCD systems, unlike our model,  are defined as completely deterministic systems
where discrete transitions occurs at region boundaries and runs
change their directions according to the rate vector available in the new region.
They showed that even under such simple dynamics the reachability problem for PCD
systems with three or more variables is undecidable~\cite{AMP95}. 
The work that is closest to the results in this paper is on constant-rate
multi-mode systems by Alur et al.~\cite{ATW12,AFMT13}. 
However, our model strictly generalizes this model and permits analysis of
multi-mode systems with non-convex safety set. 
On the positive side, Asarin, Maler, and Pnueli~\cite{AMP95} gave an algorithm
to solve the reachability problem for two-dimensional PCD systems.

The paper is organized as the following.
In the next section we introduce technical notations and background required for
 the paper.
In Section~\ref{sec:reach-sched} we present weak singular hybrid automata and
show the decidability and complexity results for the reachability,
schedulability, and LTL model-checking problems.  
In Section~\ref{sec:undec} we present the two undecidability results related to
WSHA.
Due to lack of space the details of most of the proofs are in appendix.

\section{Preliminaries}
\label{sec:definitions}
 Let $\Real$ be the set of real numbers.
 Let $X$ be a finite set of real-valued variables.
 A \emph{valuation} on $X$ is a function $\nu : X \to \Real$.
 We assume an arbitrary but fixed ordering on the variables and write $x_i$
 for the variable with order $i$. 
 This allows us to treat a valuation $\nu$ as a point $(\nu(x_1), \nu(x_2),
 \ldots, \nu(x_n)) \in \Real^{|X|}$. 
  Abusing notations slightly, we use a valuation on $X$ and a point in
  $\Real^{|X|}$ interchangeably. 
  We denote points in this state space by $\px, \py$,  vectors by $\vr, \vv$, and 
  the $i$-th coordinate of point $\px$ and vector $\vr$ by $\px(i)$ and $\vr(i)$,
  respectively. 
  We write $\vzero$ for a vector with all its coordinates equal to $0$.
  We say that a set $S \subseteq \Real^n$ is {\em bounded} if there exists
  $d \in \Rplus$ such that for all $\px, \py \in S$ we have
  $\norm{\px - \py} \leq d$.

 We define a constraint over a set $X$ as a subset of $\Real^{|X|}$.
 We say that a constraint is \emph{polyhedral} if it is defined as the conjunction
 of a finite set of linear constraints of the form 
 $a_1 x_1 + \dots + a_n x_n \bowtie k,$
 where $k \in \Int$, for all $1 \leq i \leq n$ we have that 
 $a_i \in \Int, x_i \in X$, and $\bowtie \in \{<,\leq, =, >, \geq\}$.   
 Every polyhedral constraints can be written in the standard form $A\px \leq
 \vb$ for some matrix $A$ of size $k \times n$ and a vector $\vb \in \Int^k$.
 We call a bounded polyhedral constraint  a \emph{convex polytope}.
 For a constraint $G$, we write $\sem{G}$ for the set of valuations in
 $\Real^{|X|}$ satisfying the constraint $G$.  
 We write $\top$ ( resp., $\bot$) for the special constraint that is true
 (resp., false) in all the valuations, i.e. $\sem{\top} = \Real^{|X|}$ 
(resp., $\sem{\bot} = \emptyset$). 
 We write $\poly(X)$ for the set of polyhedral constraints over $X$ including
 $\top$ and $\bot$.  

\subsection{Singular Hybrid Automata}
Singular hybrid automata extend finite state-transition graphs with a finite set
of real-valued variables that grow with state-dependent constant-rates. 
The transitions of the automata are guarded by predicates on the valuations of
the variables, and the syntax allows discrete update of the value of the
variables. 
 
 \begin{definition}[Singular Hybrid Automata]
   A singular hybrid automaton is a tuple 
   $(M, M_0, \Sigma, X, \Delta, I, F)$ where:
  \begin{itemize}
  \item 
    $M$ is a finite set of control \emph{modes} including a distinguished
    initial set of control modes $M_0 \subseteq M$, 
  \item 
    $\Sigma$ is a finite set of \emph{actions},
  \item 
    $X$ is an (ordered) set of \emph{variables}, 
  \item
    $\Delta \subseteq M \times \poly(X) \times \Sigma \times 2^{X}
    \times M$ is the \emph{transition relation}, 
  \item 
    $I: M \to \poly(X)$  is the mode-invariant function, and
    \item 
    $F: M \to \Rat^{|X|}$ is the mode-dependent \emph{flow function}
    characterizing the rate of each variable in each mode.
  \end{itemize}
  For computation purposes, we assume that all real numbers are rational and
  represented by writing down the numerator and denominator
  in binary.
\end{definition}

For all $\delta = (m, G, a, R, m') \in \Delta$ we say that $\delta$ is a
\emph{transition} between the modes $m$ and $m'$ with \emph{guard} $G \in
\poly(X)$ and reset set $R \in 2^{X}$. 
For the sake of notational convenience and w.l.o.g., we assume that an action
$a \in \Sigma$ uniquely determines a transition $(m, G, a, R, m')$, and we write
$G(a)$ and $R(a)$ for the guard and the reset set corresponding to the action $a
\in \Sigma$.
This can be assumed without loss of generality, since, in this paper, we do not
study language-theoretic properties of an SHA, and assume that the
non-determinism is resolved by the controller. 

A \emph{configuration} of a SHA $\Hh$ is a pair $(m, \nu) \in M \times
\Real^{|X|}$ consisting of a control  mode $m$ and a variable valuation $\nu {\in}
\Real^{|X|}$ such that that $\nu$ satisfies the invariant $I(m)$ of the mode $m$,
i.e. $\nu \in \sem{I(m)}$.   
We say that the transition $\delta = (m, G, a, R, m')$ is \emph{enabled} in a configuration $(m,
\nu)$ when guard $G \in \poly(X)$ is satisfied by the valuation, i.e. $ \nu \in
\sem{G}$.
Moreover, the transition $\delta$ resets the variables in $R \in 2^{X}$ to $0$.   
We write $\nu[R{:=}0]$ to denote the valuation resulting from
substituting in valuation $\nu$ the value for the variables in the set $R$ to
$0$, formally  $\nu[R{:=}0](x) = 0$ if $x \in R$ and $\nu[R{:=}0](x) = \nu(x)$
otherwise.
A \emph{timed action} of a SHA is the tuple $(t, a) \in \Rplus \times \Sigma$
 consisting of a time delay and discrete action. 
 While the system dwells in a mode $m \in M$ the valuation of the system
 flows linearly according to the rate function $F(m)$, i.e.  after spending
 $t$ time units in mode $m$ from a valuation $\nu$ the valuation of the variables
 will be $\nu + t \cdot F(m)$.

 We say that $((m, \nu), (t, a), (m', \nu'))$ is a transition of a SHA $\Hh$
 and we write $(m, \nu) \xrightarrow{t}_{a} (m', \nu')$ 
 if $(m, \nu)$ and $(m', \nu')$ are valid configurations of the SHA $\Hh$, and
 there is a transition  $\delta = (m, G, a, R, m') \in \Delta$ such that: 
 \begin{itemize}
 \item  
   all the valuations resulting from dwelling in mode $m$ for time $t$ from the
   valuation $\nu$ satisfy the invariant of the mode $m$, i.e. 
   $(\nu + F(m) \cdot \tau) \in \sem{I(m)}$ for all $\tau \in [0, t]$ 
   (observe that due to convexity of the invariant set we only need to check
   that  $(\nu + F(m) \cdot t) \in  \sem{I(m)}$); 
 \item
   The valuation reached after waiting for $t$ time-units satisfy the
   constraint $G$ (called the  guard of the transition $\delta$), i.e. 
   $(\nu + F(m) \cdot t) \in \sem{G}$, and
 \item
   $\nu' = (\nu + F(m)\cdot t)[R:=0]$. 
 \end{itemize}

 A \emph{finite run} of a singular hybrid automaton $\Hh$ is a finite sequence 
 $r = \seq{(m_0, \nu_0), (t_1, a_1), (m_1, \nu_1), (t_2, a_2), \ldots, (m_k,
   \nu_k)}$  
 such that $m_0 \in M_0$ and for all $0 \leq i < k$ we have
 that $((m_i, \nu_i), (t_{i+1}, a_{i+1}), (m_{i+1}, \nu_{i+1}))$ is a transition
 of $\Hh$.
 For such a run $r$ we say that $\nu_0$ is the \emph{starting valuation}, while
 $\nu_k$ is the \emph{terminal valuation}. 
 An \emph{infinite run} of an SHA $\Hh$ is similarly defined to be an infinite
 sequence $r = \seq{(m_0, \nu_0), (t_1, a_1), (m_1, \nu_1), (t_2, a_2), \ldots}$ 
 such that $((m_i, \nu_i), (t_{i+1}, a_{i+1}), (m_{i+1}, \nu_{i+1}))$ is a
 transition of the SHA $\Hh$ for all $i \geq 0$.
 We say that $\nu_0$ is the starting configuration of the run. 
 We say that such an infinite run 
   is Zeno if $\sum_{i=1}^{\infty} t_i < \infty$. 
 Zeno runs are physically unrealizable since they require infinitely many
 mode-switches within a finite amount of time.

\subsection{Reachability, Schedulability, and Model-Checking}
Given a finite set of atomic propositions $\Pp$ and a labeling function $L: M
{\to} 2^\Pp$, a trace of a SHA $\Hh$
corresponding to an infinite run $r=\seq{(m_0, \nu_0), (t_1, a_1), \ldots}$
is the sequence $\seq{L(m_0), L(m_1), L(m_2),  \dots L(m_n), \ldots}$ of labels
corresponding to the mode sequence of $r$.
We use the standard syntax and semantics of LTL and CTL~\cite{BK08} with the
exception that we consider traces corresponding to non-Zeno runs. 
Given a SHA $\Hh = (M, M_0,$ $\Sigma, X, \Delta, I, F)$ and a starting valuation
$\nu \in \Real^{|X|}$, we are interested in the following problems over SHA.
\begin{itemize}
\item {\bf Reachability problem}.
  Given a target polytope $\Tt \subseteq \Real^{X}$, decide whether there  exists a
  finite run from $\nu_0$ to some valuation $\nu' \in \Tt$.  
\item {\bf Schedulability}. 
  Decide whether there exists an infinite non-Zeno run starting from $\nu$.
\item {\bf LTL model checking}.
  Given a set of propositions $\Pp$, labeling function $L$, and an
  LTL formula $\phi$ decide whether all non-Zeno traces of $\Hh$ satisfy
  $\phi$. 
\item {\bf CTL model checking}.
  Given a set of propositions $\Pp$, labeling function $L$, and a CTL formula
  $\phi$ decide whether all initial modes of $\Hh$ satisfy $\phi$.  
\end{itemize}

The termination~\cite{Min67} and the recurrent computation~\cite{AH94} problems
for two-counter Minsky machines are known to be undecidable.
By encoding the two counters as two vairables, and using another variable to
do additional book-keeping, the termination and the recurrence problem for
Minsky machines can be reduced to reachability and schedulability problems for
SHA. 
\begin{theorem}[Undecidability~\cite{HKPV98,AMP95,AM98}]
  \label{thm:SHA-undec-reach}
  The reachability, schedulability, LTL and CTL model-checking  problems are
  undecidable for SHA with three variables.  
\end{theorem}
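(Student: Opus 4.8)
The plan is to reduce from the halting and recurrent-state problems for deterministic two-counter Minsky machines, both of which are undecidable~\cite{Min67,AH94}. A Minsky machine consists of a finite list of labelled instructions operating on two counters $c_1, c_2 \in \Nat$, where each instruction either increments a counter and proceeds to the next instruction, or tests a counter for zero and branches accordingly, decrementing it on the nonzero branch. From a given machine $\Mm$ I would construct a singular hybrid automaton $\Hh_\Mm$ over three variables: two variables $x_1, x_2$ holding the current counter values, and a third variable $t$ serving as a timing clock (flow rate $1$) for the book-keeping promised in the remark above. The control modes of $\Hh_\Mm$ encode the instruction pointer of $\Mm$, augmented with a few auxiliary modes realizing the arithmetic gadgets, and the labelling function marks the modes corresponding to the halt instruction and to the designated recurrent instruction.

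The heart of the construction is to simulate each instruction in exactly one unit of time. I would reset $t$ at the start of each simulated step and guard its exit transition with $t = 1$; since $t$ flows at rate $1$, this forces a delay of exactly one time unit and hence exactly integral updates to the counters. To increment $c_1$ I route the run through a mode in which $x_1$ flows at rate $+1$ while $x_2$ flows at rate $0$, so that after the unit delay $x_1$ has grown by precisely $1$; decrement is symmetric with rate $-1$. The zero-test on $c_1$ is implemented by two transitions leaving the test mode, one guarded by the polyhedral constraint $x_1 = 0$ (the zero branch) and one by $x_1 \geq 1$ (the nonzero branch, composed with the decrement gadget). Because every configuration at a step boundary keeps $x_1, x_2 \in \Nat$, these guards are mutually exclusive and exhaustive, so the branching faithfully mirrors $\Mm$; invariants of the form $x_1 \geq 0 \wedge x_2 \geq 0$ (together with $0 \le t \le 1$) keep all reachable step-boundary configurations legal while permitting unbounded counter growth.

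Reachability then follows by asking whether a configuration in the halt mode (say with target polytope $x_1 = x_2 = 0$, $t = 0$) is reachable from the initial configuration, which holds iff $\Mm$ halts. For schedulability I would arrange the gadgets so that the only infinite non-Zeno runs are those simulating $\Mm$ forever, and add progress structure forcing any sustained infinite run to pass through the recurrent mode infinitely often; an infinite non-Zeno run then exists iff $\Mm$ admits a recurrent computation. The remaining two problems are subsumed: the machine halts iff $\Hh_\Mm \not\models \mathbf{G}\,\neg\mathit{halt}$ and has a recurrent computation iff $\Hh_\Mm \not\models \neg(\mathbf{G}\mathbf{F}\,\mathit{rec})$, giving undecidability of LTL model-checking, while the corresponding existential state formulas $\mathbf{EF}\,\mathit{halt}$ and $\mathbf{EG}\mathbf{F}\,\mathit{rec}$ give undecidability of CTL model-checking.

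The step I expect to be the main obstacle is establishing faithfulness of the simulation: because time delays range over $\Rplus$, I must argue that the continuous dynamics cannot deviate from the intended integer-valued computation by spending fractional time, and that the single clock $t$ combined with the $t=1$ guard-and-reset discipline is enough to pin every step to a unit delay. The forward direction (every legal machine step is simulable) is routine, but the converse---that every run respecting the guards and invariants corresponds to a legal sequence of machine steps, with counter values remaining integral at all boundaries---is the technical core. For the schedulability reduction one additionally needs the unit-time discipline to supply a uniform positive lower bound on the elapsed time per simulated step, which rules out Zeno runs and ensures the equivalence between infinite non-Zeno runs and infinite machine computations.
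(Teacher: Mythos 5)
Your reduction is sound in outline and is exactly the route the paper itself gestures at: Theorem~\ref{thm:SHA-undec-reach} is stated as a citation of~\cite{HKPV98,AMP95,AM98}, and the only in-paper justification is the one-sentence remark that the two counters can be encoded as two variables with a third variable doing book-keeping. Your unit-delay discipline (reset $t$, rate $1$, exit guard $t=1$) combined with rates $\pm 1$ on $x_1,x_2$ and the guards $x_1=0$ versus $x_1\geq 1$ is a legitimate instantiation of that remark, and the faithfulness argument you identify as the core is indeed routine here because the $t=1$ guard pins every dwell time to exactly one unit, so step-boundary valuations stay integral. Two details need repair. First, your proposed target polytope $x_1=x_2=0,\ t=0$ is the initial valuation, so it is trivially reachable; since the paper's reachability problem is stated over valuations only, you must either make the halt mode the unique place where some designated valuation is attainable (e.g.\ via an invariant or rate available only there), or do what the paper's own later undecidability arguments do and phrase the reduction as reachability of the mode \textsc{HALT}. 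Second, the paper's LTL semantics quantifies over traces of \emph{non-Zeno} runs, so if the halt mode is a dead end a halting machine produces \emph{no} infinite non-Zeno traces and $\square\,\neg\mathit{halt}$ holds vacuously; you need a time-diverging self-loop on the halt mode (the paper adds exactly such a loop, guarded by $y=1$ with a reset, in its CTL construction) before $\square\,\neg\mathit{halt}$ separates halting from non-halting. It is also worth noting that your direct integer encoding makes $x_1,x_2$ unbounded, which is fine for plain SHA but is precisely why the paper's detailed constructions (Theorems~\ref{thm:ctl-mc-undec}, \ref{thm:sha-undec-three-var}, \ref{thm:sha-undec-three-var-one-clock}) instead use the bounded fractional encoding $5-\tfrac{1}{2^{c}}$: those settings require bounded invariants, and your encoding would not transfer to them.
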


\subsubsection{Improved complexity results.}

Using just two variables $x,y$ (of which $y$ is only a clock variable), with the encoding 
$x=2-\frac{1}{2^{c_1}3^{c_2}}$ for counters $c_1,c_2$, we improve the undecidability 
result for CTL model checking of SHAs:   
\begin{theorem}
  \label{thm:ctl-mc-undec}
  CTL Model-checking problem for singular hybrid automata with two variables is
  $\UNDEC$. 
\end{theorem}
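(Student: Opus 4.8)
The plan is to reduce the halting problem for two-counter Minsky machines~\cite{Min67} to CTL model-checking of a singular hybrid automaton $\Hh$ with exactly two variables $x$ and $y$, where $y$ is a clock ($\dot y = 1$ in every mode) used purely for book-keeping. Following the hint, I encode a configuration with counter values $c_1,c_2$ by the single data variable via $x = 2 - \frac{1}{2^{c_1}3^{c_2}} \in [1,2)$, so that $v := 2-x = \frac{1}{2^{c_1}3^{c_2}} \in (0,1]$ and the initial configuration $(0,0)$ is $x=1$. The four counter operations become affine maps on $v$: incrementing $c_1$ halves $v$, incrementing $c_2$ divides it by three, and the two decrements are the inverses ($v \mapsto 2v$ and $v \mapsto 3v$). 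I would build one control gadget per Minsky instruction, wire them together along the program's control flow, and designate a mode $m_{halt}$ reached exactly when the machine halts, so that the instance becomes $\Hh,(m_0,1)\models\phi$ for a fixed CTL formula $\phi$ asserting that a faithful simulation reaches $m_{halt}$.

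Each arithmetic update can be made \emph{exact} using only equality guards, so no drift accumulates. On entering a gadget I reset $y:=0$ and flow $x$ at rate $\pm 1$ with $\dot y = 1$; a single polyhedral guard $ax+by=c$ with integer coefficients fires at precisely the right instant. For example, to halve $v$ I flow $\dot x = 1, \dot y = 1$ and stop at the guard $x+y=2$: at that moment $2y = 2-x_0 = v_0$, so the elapsed time is $v_0/2$ and the new value is exactly $v_0/2$; tripling $v$ uses the guard $2x+3y=4$, and so on. Since the only data variable is $x$ and the only clock is $y$, both are reused across gadgets, so each gadget restores $y:=0$ and keeps $x$ in the encoding range on exit. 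Thus the simulation of increments and decrements is error-free purely by the guard dynamics, independently of CTL.

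The main obstacle, and the point where CTL is essential, is the zero-test. The predicate $c_1 = 0$ asserts that the denominator of $v$ carries no factor of $2$, a divisibility condition that no polyhedral guard on $x,y$ can express. I would exploit the identity $c_1 = 0 \Leftrightarrow \exists k.\,3^{k} v = 1$: attach a sub-gadget that repeatedly triples $v$ (each step exact, via the equality guards above) as long as the result stays $\le 1$, routing into a mode labelled $\mathit{eq}$ through a guard $x=1$. Because tripling from $v = \frac{1}{2^{c_1}3^{c_2}}$ yields $\frac{1}{2^{c_1}3^{c_2-k}}$, the value $1$ is hit exactly iff $c_1=0$ (for $c_1\ge 1$ the iterates are dyadic and always step over $1$ without hitting it). Hence the CTL state formula $Z_1 := \E\mathbf{F}\,\mathit{eq}$ captures $c_1=0$ precisely, and a symmetric doubling gadget gives $Z_2$ for $c_2=0$. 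Proving this equivalence — that exact reachability of $x=1$ in the arithmetic gadget holds if and only if the counter is zero — is the delicate part I expect to require the most care.

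Finally, conditional jumps are resolved through the state formulas $Z_1,Z_2$, whose composability as CTL state formulas is exactly what plain reachability lacks. Taking $\mathit{good}$ to be a propositional combination of mode labels with $Z_1,Z_2$ that forces each branch of every conditional to agree with the corresponding zero-test, I set $\phi = \E(\mathit{good} \until m_{halt})$. Then $\Hh,(m_0,1)\models\phi$ holds if and only if the Minsky machine halts. All rates are rational and all guards are integer-coefficient polyhedral constraints, so $\Hh$ is a legitimate singular hybrid automaton over the two variables $x$ (data) and $y$ (clock); undecidability of halting therefore yields undecidability of CTL model-checking, improving Theorem~\ref{thm:SHA-undec-reach} from three variables to two.
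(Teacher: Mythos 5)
Your proposal is correct and follows essentially the same route as the paper: the identical encoding $x = 2-\tfrac{1}{2^{c_1}3^{c_2}}$ with a clock $y$, exact arithmetic gadgets for the counter updates, and—crucially—the same key idea of realizing the zero-test by a side gadget that iteratively multiplies the encoded value up to $1$, verified by an embedded $\E\Diamond$ state formula inside an overall $\E(\cdot\until\mathrm{HALT})$ specification. The only differences are implementation details (you use two-variable equality guards in a single mode where the paper chains modes with rates like $-6,3,-3$ and single-variable guards, and you use $\neg Z_1$ for the nonzero branch where the paper builds a second verification widget $W2$), neither of which changes the argument.
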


We adapt the construction of Laroussinie et al.~\cite{LMS04}
for the case of one clock timed automata to show the following results for
SHA  with one variable. 
 \begin{theorem}
   \label{thm:dec-SHA-one-var}
   For SHA with one variable we have the following results.
   \begin{itemize}
   \item[(a)]
     The reachability and the schedulability problems are $\NLC$.
   \item[(b)]
     LTL Model-checking problem is $\PSPACEC{}$. 
   \end{itemize}
 \end{theorem}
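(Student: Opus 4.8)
The plan is to reduce each problem for a one-variable SHA $\Hh$ to a finite-state analogue by building a \emph{region graph} $\Rr(\Hh)$ of size polynomial in $|\Hh|$ and then solving the corresponding problem on it. Let $C$ be the finite set of rational constants occurring as endpoints of the guards, invariants, and (for reachability) the target polytope $\Tt$, together with $0$ and the starting value $\nu_0$. Since $|X|=1$, every guard and invariant is an interval with endpoints in $C$, so $C$ partitions $\Real$ into finitely many \emph{regions}: the open intervals between consecutive constants, the singletons $\{c\}$ for $c\in C$, and the two unbounded rays. There are only $O(|C|)$ regions, so $\Rr(\Hh)$ has $O(|M|\cdot|C|)$ vertices, polynomial in $|\Hh|$. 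A vertex is a pair (mode, region); I put a \emph{time edge} from $(m,\rho)$ to $(m,\rho')$ whenever, starting somewhere in $\rho$ and flowing at the constant rate $F(m)$, one can reach $\rho'$ while remaining inside $I(m)$ throughout, and a \emph{discrete edge} for each transition $\delta=(m,G,a,R,m')$ whose guard contains $\rho$, leading to $(m',\{0\})$ if $R{=}\{x\}$ and to $(m',\rho)$ otherwise (provided $\rho\subseteq\sem{I(m')}$).

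Soundness and completeness of $\Rr(\Hh)$ rest on the fact that, with a single variable, the set of regions reachable by flowing from a point depends only on its region and the sign of $F(m)$: flowing right (resp.\ left) reaches exactly the regions to the right (resp.\ left) that are connected to $\rho$ by a path lying in $I(m)$. Hence every run of $\Hh$ projects to a path of $\Rr(\Hh)$ with the same mode sequence, and conversely every path is realizable by some run. This is the point I expect to be the main obstacle, since—unlike the one-clock timed-automata construction of Laroussinie et al.~\cite{LMS04} that we adapt—the rate is no longer fixed to $1$ but may be zero or negative, and one must decide whether a strict invariant boundary is actually attainable; the open/closed status of each endpoint must therefore be tracked carefully. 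Crucially, the edge relation of $\Rr(\Hh)$ needs only comparisons of the binary-encoded constants and sign tests on the rates, all computable in \NLOGS{}; we never materialise the (possibly huge) region widths.

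For part (a), \Reach{} becomes plain graph reachability in $\Rr(\Hh)$ from $(m_0,\rho_0)$ to any vertex whose region lies in $\Tt$, which is in \NLOGS{} because \textsc{Stcon} is and the edge relation is logspace-computable. For \Sched{} I characterise the existence of a non-Zeno infinite run as the existence of a vertex reachable from the start lying on a cycle that contains a \emph{time-progress} edge: a time edge crossing a bounded region of positive width (which takes a fixed positive time $\mathit{width}/|F(m)|$) or a dwell in a rate-$0$ mode where arbitrarily long dwelling is possible. Repeating such a cycle makes the total elapsed time diverge, while an infinite run all of whose cycles lack a time-progress edge is necessarily Zeno. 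Detecting a reachable time-progress cycle is again an \NLOGS{} computation (guess the cycle vertex and verify reachability and return, using closure of \NLOGS{} under composition and, by Immerman--Szelepcs\'enyi, complement). The matching \NLC{} lower bounds follow by encoding directed reachability: represent vertices as modes and edges as $\top$-guarded transitions with $F\equiv 0$; for \Reach{} give the target vertex rate $1$ (all others rate $0$) and ask to reach the polytope $x\ge 1$ from $x=0$, so that the polytope is entered iff the target vertex is; for \Sched{} start instead from a DAG (reachability in DAGs is still \NLC{}) and attach a rate-$0$ self-loop at the target, so that—since a DAG has no infinite paths—a non-Zeno infinite run exists precisely when the target is reached.

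For part (b), \PSPACE{} membership follows the automata-theoretic recipe: translate $\neg\phi$ into a Büchi automaton $\Aa_{\neg\phi}$, form the product $\Rr(\Hh)\times\Aa_{\neg\phi}$, and search for a \emph{non-Zeno} accepting lasso, namely a reachable cycle that both visits an accepting Büchi state and contains a time-progress edge. A product state is a triple (mode, region, subset of subformulas) of polynomial size, so although $\Aa_{\neg\phi}$ is exponential we never store it explicitly; the nondeterministic lasso search uses only polynomial space and runs in \PSPACE{} by Savitch's theorem, and $\phi$ holds on all non-Zeno traces iff this search fails. \PSPACE{}-hardness is inherited from LTL model-checking of finite Kripke structures~\cite{BK08}: embed the structure as a one-variable SHA with $F\equiv 0$ and $\top$ guards and invariants, letting one time unit elapse before each discrete step so that every infinite path lifts to a non-Zeno run; then the non-Zeno traces of the SHA are exactly the paths of the Kripke structure and the two model-checking instances coincide.
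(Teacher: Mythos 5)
Your region construction is essentially the paper's: both adapt the Laroussinie--Markey--Schnoebelen one-clock regions, keeping only the linearly many intervals between consecutive constants so that a region fits in logarithmic space, and your treatment of reachability, the \NLOGS{} lower bounds, and the automata-theoretic \PSPACE{} argument all line up with the paper's proof. The problem is your characterization of non-Zenoness, which is exactly the point the paper flags as the delicate one (its Lemma~\ref{thm:sha-one-var-nz-main} with five separate cases). You claim that an infinite run whose cycles contain no ``time-progress edge''---a time edge traversing a bounded region of positive width, or a dwell in a rate-$0$ mode---is necessarily Zeno. This is false for one-variable SHA, in two ways. First, a run may eventually remain forever in an unbounded region $\sem{x>C_x}$ (or $\sem{x<c_x}$) while infinitely often visiting a mode of positive (resp.\ negative) rate; no bounded region is ever crossed and no rate is zero, yet one can wait a full time unit at each such visit, so a non-Zeno instantiation exists (conditions~2 and~3 of the paper's lemma). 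Second, and more importantly, a run may oscillate inside a single thick bounded region using modes whose rates have opposite signs: the paper's Figure~\ref{fig:progressive} gives two modes with rates $+1$ and $-1$ and guards $x<1$, where the run $((m_0,0),(m_1,0.5),(m_0,0.25),\ldots)$ never traverses the region $(0,1)$ end to end, yet each up--down round trip can be made to consume a fixed positive amount of time, so the run is non-Zeno (condition~5 of the lemma). On such instances your schedulability algorithm answers ``no'' although a non-Zeno run exists, so part~(a) is incorrect as stated; the same defect propagates to part~(b), where your lasso search would wrongly discard these accepting runs as Zeno. (Note also that a time edge that merely stays inside a thick region, rather than traversing it between its two endpoints, forces no minimum delay, so if you intended to count such self-loops as time-progress edges the argument fails in the other direction.)

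The paper repairs this by proving a complete characterization of progressive region paths---rate $0$ infinitely often, or trapped in an unbounded region with rates of the right sign infinitely often, or infinitely many region changes, or trapped in a thick region with rates of both signs infinitely often---and, for the LTL result, by encoding these conditions as an LTL formula $\phi_C$ so that model checking over non-Zeno runs reduces to ordinary model checking of $\phi\wedge\phi_C$. Your proposal needs the analogous completeness lemma (and a correspondingly enlarged notion of time-progress cycle) before either the schedulability algorithm or the non-Zeno lasso search is justified; the paper itself points out that the timed-automata progressiveness conditions of~\cite{AD94}, which your criterion resembles, are not complete once negative rates are allowed.
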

 \begin{proof}{(Sketch.)}
  The $\NLOGS$-hardness of the reachability problem for SHA follows from the
  complexity of reachability problem for finite graphs~\cite{jones1976new}.  
  On the other hand, the $\NLOGS$-hardness of the schedulability problem for SHA follows
  from the complexity of nonemptiness problem for B\"uchi automata (Proposition
  10.12 of ~\cite{PP04}).  
  For $\NLOGS$-membership of these problems we adapt the region construction
  (LMS regions) for 
  one-clock timed automata  proposed by Laroussinie, Markey, and
  Schnoebelen~\cite{LMS04}.  
  
  For LTL model checking the \PSPACE-hardness follows from the
  PSPACE-completeness~\cite{SC85} for LTL model checking on finite automata,
  while the PSPACE membership follows from the region construction introduced in
  the proof of (a).   
  However, we need to be extra careful as our semantics are defined with respect
  to non-Zeno runs.  
  To overcome this complication, we characterize non-Zenoness property of the
  region graphs as LTL formulas using the following Lemma. 
%
\begin{lemma}
  \label{thm:sha-one-var-nz-main}
  Let $c_x$ and $C_x$ denote the smallest and largest constants 
  used in guards of $\Hh$ and let $\Rr\Gg_{\Hh}$ be the LMS region graph of $\Hh$.
  An infinite run in the region graph $\Rr\Gg_{\Hh}$  of the form 
  $((m_{0},r_{0}), a_{0}, (m_{1},r_{1}), a_{1}, \ldots)$ is called
  \emph{progressive} iff it has a non-Zeno instantiation.  
  Here, $r_{j}$ and $m_{j}$ are respectively the regions and 
  modes. 
  An infinite run $((m_{0},r_{0}), a_{0}, (m_{1},r_{1}), a_{1}, \ldots)$ 
  in the region graph $\Rr\Gg_{\Hh}$ of a one-variable SHA $\Hh$ is progressive
  iff one of the following hold: 
  \begin{enumerate}
  \item \label{nz-1} 
    For all $j{\geq} 0$ there exists $k>j$ such that $F(m_{k}) = 0$;  
  \item \label{nz-2} 
    There exists $n{\geq} 0$ such that for all $j{\geq} n$ we have
    that $\sem{r_{j}} = \sem{x {>} C_x}$ and there exists $k > j$ such
    that $F(m_{k}) > 0$;
  \item \label{nz-3} 
    There exists $n{\geq} 0$ such that for all $j{\geq} n$ we have that $\sem{r_{j}}
    = \sem{x <c_x}$ and there exists a $k > j$ such that $F(m_{k}) < 0$; 
  \item \label{nz-4} 
    For all $j{\geq} 0$ there exists $k > j$
    s.t. ${r_{j}} \not = {r_{k}}$; or
  \item \label{nz-5} 
    There exists $n{\geq} 0$ and a thick region $r$ such that
    for all $j{\geq} n$ we have that $r_{j} = r$ and there exists $k > j$ such
    that $F(m_{j}).F(m_{k}) < 0$.
  \end{enumerate}
\end{lemma}
Given an LTL formula $\phi$, and a one variable SHA $\Hh$, we can express in LTL
the conditions  characterizing non-zeno runs of $\Hh$ as given by Lemma~\ref{thm:sha-one-var-nz-main}.  
Let $\phi_C$ be this LTL formula. Model checking of $\phi$ over all non-Zeno
runs then reduces to  standard model checking against formula $\phi \wedge \phi_C$.  
   \qed
 \end{proof}






\section{Weak Singular Hybrid Automata}
\label{sec:reach-sched}
We begin this section by formally introducing constant-rate multi-mode systems
and review the decidability of reachability and schedulability problems for this
class. 
We later present the weak singular hybrid automata model and show the decidability
of various verification problems. 
\subsection{Constant-rate Multi-mode Systems}
\begin{definition}[Constant-Rate Multi-Mode Systems]
We say that a singular hybrid automaton $\Hh = (M, M_0, \Sigma, X, \Delta, I,
F)$ is a \emph{constant-rate multi-mode system} if 
\begin{itemize}
  \item 
    there is a bounded and open polytope  $S$, called the safety set, such that for
    all modes $m \in M$ we have that $I(m) = S$, and
  \item 
    all the modes in  $M$ form a strongly-connected-component, and for every
    mode $m, m' \in M$ if there is a transition $(m, G, a, R, m') \in \Delta$
    then $G=\top$, and $R=\emptyset$. 
  \end{itemize}
\end{definition}
We have slightly modified the definition of CMS from~\cite{ATW12}  to adapt it
to the presentation used in this paper. 
Moreover, we have restricted the safety set to be an open set to avoid problems
in reaching a valuation using infinitely many transitions. 
Notice that there is no structure in a CMS in the sense that all of the modes
can be chosen in arbitrary order as long as the safety set is not violated. 
Alur et al.~\cite{ATW12} showed that due to lack of structure, the
schedulability and the reachability problems for CMS can be reduced to LP
feasibility problem, and hence can be solved in polynomial time. 
\begin{theorem}[Reachability and Schedulability for CMS~\cite{ATW12}]
  \label{thm:mms-reach}
  \label{thm:mms-sched}
  The schedulability and the reachability problems for CMS can be solved in
  polynomial time. 
\end{theorem}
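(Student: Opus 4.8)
The plan is to show that for a CMS the only datum relevant to a run is the total amount of time spent in each mode, and thereby reduce both problems to linear-program feasibility, which is solvable in polynomial time (Khachiyan).

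First, for reachability, I would establish a \emph{shuffling lemma}: a valuation $\nu'$ is reachable from $\nu_0$ while staying in the safety set $S$ if and only if there exist durations $t_m \geq 0$, one per mode $m \in M$, with $\nu' = \nu_0 + \sum_{m \in M} t_m F(m)$. The forward direction is immediate by accumulating the time spent in each mode along a run, since resets are empty and guards are vacuous, so no other constraint intervenes. For the converse, given such durations, I would partition $[0,1]$ into $N$ equal rounds and, within each round, visit every mode for time $t_m/N$ in a fixed cyclic order; this is a legal schedule because $M$ is a strongly connected component with $G=\top$ and $R=\emptyset$ on all internal transitions. As $N \to \infty$ the resulting polygonal trajectory converges uniformly to the straight segment from $\nu_0$ to $\nu'$, which lies in $S$ by convexity of $S$ together with the fact that both endpoints lie in $S$; since $S$ is open and the segment is compact, for $N$ large enough the whole trajectory stays inside $S$. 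Consequently, reachability into a target polytope $\Tt$ reduces to the feasibility of the system $t_m \geq 0$ for all $m$, together with the linear constraints (some strict, coming from the open set $S$) expressing $\nu_0 + \sum_{m \in M} t_m F(m) \in \Tt \cap S$, an LP feasibility instance in $|M|$ nonnegative variables and polynomially many constraints.

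For schedulability I would invoke Gordan's theorem of the alternative on the rate vectors $\{F(m)\}_{m \in M}$: either there is a nontrivial nonnegative combination with $\sum_{m \in M} \lambda_m F(m) = \vzero$, $\lambda_m \geq 0$, $\sum_{m \in M}\lambda_m = 1$, or there is a direction $\vec{c}$ with $\vec{c} \cdot F(m) > 0$ for every mode $m$. In the first case I would scale the $\lambda_m$ down so that the corresponding closed cycle fits inside a ball around $\nu_0$ (possible since $S$ is open and $\nu_0 \in S$); shuffling this cycle as above keeps the trajectory in $S$ and returns exactly to $\nu_0$, so repeating it forever yields an infinite run whose per-cycle elapsed time is a fixed positive constant, hence non-Zeno, witnessing schedulability. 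In the second case, along any infinite run the scalar $\vec{c} \cdot \nu$ increases at rate at least $\min_{m \in M} \vec{c} \cdot F(m) > 0$, so it can remain inside the bounded set $S$ only for finite total time; every infinite run in $S$ is therefore Zeno, and the system is not schedulable. Deciding which alternative holds is again an LP feasibility test on $\{F(m)\}$, solvable in polynomial time.

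The main obstacle I anticipate is making the shuffling/convexity argument fully rigorous in the presence of the \emph{open} safety set: one must show that a sufficiently fine interleaving of the per-mode durations keeps the entire polygonal path --- not just its vertices --- inside $S$, and must handle the strict inequalities defining $S$ carefully so that the reduction to a standard (non-strict) linear program is exact and the target is genuinely reached rather than only approached in the limit. Cleanly establishing the Gordan alternative, and verifying that the periodic schedule it produces is truly non-Zeno with uniformly bounded cycle time, is the other point that will need care.
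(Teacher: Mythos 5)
Your proposal is correct and takes essentially the same route as the paper: both reduce reachability and schedulability to feasibility of linear programs in the per-mode dwell times $t_m$, handle the converse direction of reachability by interleaving/scaling the mode durations inside the open convex safety set, and settle the negative schedulability case by a theorem of the alternative (the paper invokes Farkas's lemma where you invoke Gordan's, which is the same dichotomy). Your explicit $N$-round shuffling argument is simply a rigorous elaboration of the step the paper leaves informal as ``scaling $t_m$'s appropriately without leaving the safety set.''
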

\begin{proof}
  \noindent{\bf Reachability}. 
  Let $\Hh = (M, M_0, X, \Delta, I, F)$ be a CMS with the safety set $S$ and 
  the target  polytope $\Tt$ be given as a system of linear inequalities $AX
  \leq \vb$. 
  Moreover assume that $\nu$ and $\Tt$ are in the safety set $S$.
  Alur et al. showed that the target set $\Tt$ is reachable from 
  $\nu$ iff the following linear program is feasible: 
  \begin{eqnarray}
    \nu + \sum_{m \in M} F(m) \cdot t_m &=& \nu', \nonumber\\
    A \nu' &=& \vb, \text{ and } \label{eqnreachmms}\\
    t_m  &>=&  0, \text{ for all $m \in M$}.\nonumber 
  \end{eqnarray}
  If this linear program is not feasible, then it is immediate that it is not
  possible to reach any valuation in $\Tt$  from $\nu$ using modes in any
  sequence from $\Hh$. 
  On the other hand, if the program is feasible, and as long as both the
  starting valuation and the target set are strictly inside the safety set, a
  satisfying assignment $\seq{t_m}_{m \in M}$ can be used to make progress
  towards $\Tt$ by scaling $t_m$'s appropriately without leaving the safety
  set. 
  Since the feasibility of the linear program can be decided in polynomial time,
  it follows that reachability for the CMS can be decided in polynomial time. 

  \noindent{\bf Schedulability}. 
    Let $\Hh = (M, M_0, X, \Delta, I, F)$ be a CMS with the safety set $S$, and
  initial valuation $\nu$.  
  In this case, Alur et al. showed that there exists a non-Zeno run from
  arbitrary valuation in the safety set  if and only if the following linear
  program is feasible:   
  \begin{eqnarray}
    \sum_{m \in M} F(m) \cdot t_m &=& \vzero,\nonumber \\
    \sum_{m \in M} t_m  &=& 1  \text{ and }   ~\label{eqnschedmms}\\
    t_m  &>=& 0, \text{ for all $m \in M$}.\nonumber
  \end{eqnarray}
  If this linear program is not feasible, then by Farkas's lemma it follows that
  there is a vector $\vv$ such that taking any mode for nonnegative time makes
  some progress in the direction of $\vv$. 
  Hence any non-Zeno run will eventually leave the safety set. 
  On the other hand, if the program is feasible, then a 
  satisfying assignment $\seq{t_m}_{m \in M}$ can be scaled down to stay in a
  ball of arbitrary size around the initial valuation. 
  Hence, if the starting valuation is strictly in the interior of the safety
  set, the feasibility of the linear program~(\ref{eqnschedmms}) imply the
  existence of a non-Zeno run. \qed
\end{proof}

\subsection{Syntax and Semantics}
\emph{Weak singular hybrid automata} (WSHA) can be considered as generalized
constant-rate multi-mode systems with structure, and thus bringing the CMS
closer to singular hybrid automata.  
The restriction on WSHA ensures that the strongly connected components of WSHA
form  CMS, and thus recovering the decidability for the reachability and the
schedulability problem.   
Formally we define WSHA in the following manner. 
\begin{definition}[Weak Singular Hybrid Automata]
  A weak singular hybrid automaton $\Hh = (M, M_0, \Sigma, X, \Delta, I, F)$ is
  a SHA with the restriction 
  that there is a partition on the set of modes $M$ characterized by a function
  $\varrho: M \rightarrow \Nat$ assigning \emph{ranks} to the modes such that
  \begin{itemize}
  \item  
    for every transition $(m, G, a, R, m') \in \Delta$ we have that    
    $\varrho(m) \leq \varrho(m')$, and 
  \item  
    for every rank $i$ the set of modes $M_i = \set{m \::\: \varrho(m) = i}$ is
    such that 
    \begin{itemize}
    \item[--] 
      there is a bounded and open polytope $S_i$, called the safety set of $M_i$,
      such that  for all modes $m \in M_i$ we have that $I(m) = S_i$; and
    \item[--]
      all the modes in $M_i$ form a strongly-connected-component, and for every
      mode $m, m' \in M_i$ if there is a transition $(m, G, a, R, m') \in \Delta$
      then $G=\top$, and $R=\emptyset$. 
    \end{itemize}
  \end{itemize}
\end{definition}
Observe that every CMS is a weak singular hybrid automaton (WSHA), and every
strongly connected component of a WSHA is a CMS. 
Also notice that for every (finite or infinite) run 
$r = \seq{(m_0, \nu_0), (t_1, a_1), (m_1, \nu_1), \ldots}$
of a WSHA we have that $\varrho(m_i) \leq \varrho(m_j)$ for every $i \leq j$.
We define the type $\Gamma(r)$ of a finite run 
$r = \seq{(m_0, \nu_0), (t_1, a_1), (m_1, \nu_1), \ldots, (m_k, \nu_k)}$
as a finite sequence of ranks (natural numbers) and actions  
$\seq{n_0, b_1,  n_1,\ldots, b_p, n_p}$ defined inductively in the following manner: 
 \begin{eqnarray*}
   \Gamma(r) = 
 \begin{cases}
   \seq{\varrho(m_0)} & \text{ if $r = \seq{(m_0, \nu_0)}$}\\
   \Gamma(r') \oplus (a, \varrho(m)) & \text{ if $r = r'::\seq{(t, a), (m, \nu)}$},
 \end{cases}
 \end{eqnarray*}  
where $::$ is the cons operator that appends two sequences, while 
for a sequence $\sigma = \seq{n_0, b_1, n_1, \ldots, b_p, n_p}$, $a \in \Sigma$, and $n \in
\Nat$ we define $\sigma \oplus (a, n)$ to be equal to $\sigma$ if $n_p = n$ and
$\seq{n_0, b_1, n_1, \ldots, n_p, a, n}$ otherwise.
Intuitively, the type of a finite run gives the (non-duplicate) sequence of
ranks of modes and actions appearing in the run, where action is stored only
when a transition to a mode of higher rank happens. 
We need to remember only these actions since transitions that stay in the modes
of same rank do not reset the variables. 
It is an easy observation that, since there are only finitely many ranks for a
given WSHA, we have that for every infinite run 
$r = \seq{(m_0, \nu_0), (t_1, a_1), (m_1, \nu_1), \ldots}$ there
exists an index $i$ such that  for all $j \geq i$ we have that $\varrho(m_i) =
\varrho(m_j)$. 
With this intuition we define the type of an infinite run $r$ as the type of the
finite prefix of $r$ till index $i$.
We write $\Gamma_\Hh$ for the set of run types of a WSHA $\Hh$.
\begin{theorem}
  \label{thm:wsha-reach-np-hard}
  The reachability and the schedulability problems for weak singular hybrid
  automata is NP-complete.
\end{theorem}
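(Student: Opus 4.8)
\smallskip
\noindent\textbf{Proof plan.}
The plan is to establish membership in \textsc{NP} and \textsc{NP}-hardness separately, for both reachability and schedulability. For membership, the crucial structural fact is that along any run the rank is non-decreasing and assumes at most $|M|$ distinct values; hence in a run type $\Gamma(r) = \seq{n_0, b_1, n_1, \ldots, b_p, n_p}$ the ranks satisfy $n_0 < n_1 < \cdots < n_p$, so $p < |M|$ and a type is an object of \emph{polynomial} size (even though the set $\Gamma_\Hh$ of types is exponentially large). The nondeterministic algorithm therefore guesses a single type and verifies it by solving one linear feasibility problem in polynomial time.

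Given a type $\seq{n_0, b_1, \ldots, b_p, n_p}$ I would introduce, for each segment $i \in \set{0,\dots,p}$, nonnegative dwell variables $t_{i,m}$ for every $m \in M_{n_i}$ together with valuation vectors $\nu_i, \nu_i' \in \Real^{|X|}$, and assert the constraints
\begin{align*}
  \nu_0 &= \nu, \qquad \nu_p' \in \Tt, \\
  \nu_i' &= \nu_i + \textstyle\sum_{m \in M_{n_i}} F(m)\cdot t_{i,m}, \quad t_{i,m}\geq 0, \quad \text{for } 0\le i\le p,\\
  \nu_i, \nu_i' &\in S_{n_i}, \quad \text{for } 0\le i\le p,\\
  \nu_i' &\in \sem{G(b_{i+1})}, \quad \nu_{i+1} = \nu_i'[R(b_{i+1}){:=}0], \quad \text{for } 0\le i<p.
\end{align*}
The equivalence ``this system is feasible iff $\Tt$ is reachable by a run of the guessed type'' is the heart of the argument. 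The forward direction decomposes a witnessing run at its rank boundaries: inside the CMS $M_{n_i}$ the net displacement equals $\sum_m F(m)\cdot(\text{time spent in } m)$, which supplies the $t_{i,m}$, while the guard and reset constraints record the inter-rank transition $b_{i+1}$; here I use that each $M_{n_i}$ is strongly connected with vacuous guards, so only the valuation $\nu_i'$ (not the particular exit mode) matters. The backward direction realizes each segment by the scaling argument of Theorem~\ref{thm:mms-reach}: since $\nu_i$ and $\nu_i'$ lie strictly inside the \emph{open} convex set $S_{n_i}$, a feasible $\seq{t_{i,m}}$ can be chopped into sufficiently many small rounds yielding a finite non-Zeno trajectory inside $S_{n_i}$ from $\nu_i$ to $\nu_i'$; the segments are then glued by firing $b_{i+1}$ (its guard holds at $\nu_i'$ and its reset produces $\nu_{i+1}$). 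As all constraints are linear—strict inequalities coming only from the open safety sets, which keeps feasibility polynomial-time decidable—reachability is in \textsc{NP}. For schedulability I would guess the final rank $n_p$ in which the run is eventually trapped, drop the target constraint $\nu_p' \in \Tt$ (requiring only $\nu_p \in S_{n_p}$), and additionally demand that the CMS-schedulability program~\eqref{eqnschedmms} for $M_{n_p}$ be feasible; by Theorem~\ref{thm:mms-sched} this produces a non-Zeno infinite run in $M_{n_p}$ which, appended to the finite reaching-prefix, is non-Zeno, and conversely every infinite run is ultimately confined to a single rank.

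For hardness I would reduce from \textsc{Subset-Sum}: given $a_1,\dots,a_n \in \Nat$ and target $T$, build a WSHA over two variables, an accumulator $x$ and an auxiliary clock $c$, whose ranks strictly increase along the item index (so the automaton is a genuine WSHA). Each item $i$ is processed by a gadget of \emph{single-mode} ranks: from the ``pre-$i$'' mode two rank-increasing transitions (both resetting $c{:=}0$) lead either into an \emph{include} mode with flow $1$ on both $x$ and $c$, whose exit transition carries guard $c = a_i$ (forcing dwell time exactly $a_i$, hence increasing $x$ by exactly $a_i$) and reset $c{:=}0$, or into an \emph{exclude} mode with zero flow whose exit transition has guard $c = 0$ (immediate, $x$ unchanged); both rejoin at the ``pre-$(i+1)$'' mode. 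Taking the safety sets to be the bounded open boxes $x \in (-1,T+1)$ and $c \in (-1,1+\max_i a_i)$ keeps the automaton well-formed. Because every rank is a singleton there is no intra-rank freedom, so the only nondeterminism is the include/exclude choice at each item, i.e.\ a subset $S$; since $x$ never decreases, the terminal mode with target predicate $x = T$ is reachable iff some $S$ sums exactly to $T$, establishing \textsc{NP}-hardness of reachability. For schedulability I attach to the terminal mode (entered only through a guard $x = T$) a two-mode CMS with opposite rates $\pm 1$ on a fresh coordinate, so that \eqref{eqnschedmms} is feasible there whereas no intermediate singleton rank admits a non-Zeno infinite run; hence a non-Zeno run exists iff $x=T$ is reachable. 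Combining the two bounds yields \NPC{} for both problems.

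The step I expect to be the main obstacle is the equivalence underpinning membership, specifically the backward (sufficiency) direction: gluing the per-segment scaling constructions across the resets while guaranteeing that every intermediate trajectory stays strictly inside the open safety sets and that the guard point $\nu_i'$ is simultaneously reachable in the CMS, inside $S_{n_i}$, and inside $\sem{G(b_{i+1})}$. Keeping the non-Zeno requirement satisfied in the schedulability direction—ensuring the finite reaching-prefix has finite total duration while the infinite CMS-suffix is non-Zeno—requires the same care.
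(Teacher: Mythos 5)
Your proposal is correct, and the membership half is essentially the paper's own argument: guess a run type (polynomial-sized because ranks strictly increase, so $p<|M|$), encode it as a linear feasibility problem over dwell times $t_{i,m}$ and entry/exit valuations $\nu_i,\nu_i'$ per rank-segment, and realize a feasible solution inside each open safety set by the scaling argument of Theorem~\ref{thm:mms-reach}; your schedulability variant (drop the target, add the CMS-schedulability program~(\ref{eqnschedmms}) for the trapping rank) is exactly what the paper does in its appendix. Indeed your placement of the guard constraint on $\nu_i'$ (before the reset) is the semantically correct reading, where the paper's displayed system has an indexing slip. The genuine divergence is in the hardness gadget: the paper uses $n+3$ variables and \emph{no guards or resets at all} --- an initialization variable $x_0$, per-item variables $x_1,\dots,x_n$ that must be driven back to zero, an accumulator $x_{n+1}$, and a non-emptiness counter $x_{n+2}$ --- so that the target polytope alone forces every mode to be dwelt in for exactly one time unit; your construction instead uses two variables (an accumulator plus an auxiliary clock) and exploits equality guards and resets on the rank-increasing transitions to force the dwell times. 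Both are valid polynomial reductions from subset-sum; yours is more economical in dimension, while the paper's shows hardness persists even for guard-free, reset-free WSHA. Two small points to tighten: the paper's formulation of subset-sum demands a \emph{non-empty} subset (this is why it carries $x_{n+2}$), so for the degenerate case $T=0$ you should either force at least one include-mode or observe that the problem stays NP-hard restricted to $T\geq 1$; and you should say explicitly that a singleton mode without a self-loop is a (trivial) strongly connected component, so your gadget really is a WSHA --- the paper's own DAG-shaped construction relies on the same convention.
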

\begin{proof}{(Sketch)}
  To show NP-membership we show that to decide the reachability problem, it is
  sufficient to guess a finite run type, and check whether there is a run with
  that type that reaches the target polytope. 
  Since the size of every run type is polynomial in the size of the WSHA, and
  there are only exponentially many run-types, if for a run we can check whether
  there exists a run of this type reaching target polytope is polynomial time,
  the NP-membership claim follows. 
  Given a run type $\sigma = \seq{n_0, b_1, n_1, \ldots, b_p, n_p}$ an initial
  valuation $\nu_0$ and a bounded and convex target polytope $\Tt$ given as $A X
  \leq \vb$, there exists a run with type $\sigma$ that reaches a valuation in
  $\Tt$ if and only if the following linear program is feasible: 
  for every $ 0 \leq i \leq p$ and  $m \in M_{n_i}$ there are 
  $\nu_{n_i}, \nu_{n_i}' \in \Real^{|X|}$ and  $t_i^m \in \Rplus$ such that:
  \begin{eqnarray}
    \nu_0 &=& \nu_{n_0}, \nu_{n_p}'  \in \Tt \nonumber\\
    \nu_{n_i}, \nu_{n_i}'  & \in & S_{M_{n_i}} \text{ for all $0 \leq i \leq p$} 
    \nonumber \\
    \nu_{n_i}   & \in & G(b_i) \text{ for all $0 < i \leq p$}
    \nonumber \\
    \nu_{n_{i+1}}(j) &=& 0 \text{ for all $x_j \in R(b_{i+1})$ and  $0 < i \leq p$}
    \nonumber \\
    \nu_{n_{i+1}}(j) &=& \nu_{n_i}'(j)  \text{ for all $x_j \not \in R(b_{i+1})$ and  $0 < i \leq p$}
    \nonumber \\
    \nu_{n_i}'  & = & \nu_{n_i} + \sum_{m \in M_{n_i}} F(m) \cdot t_i^m \text{
      for all $0 \leq i \leq p$}\nonumber\\
    t_i^m  &\geq&  0 \text{ for all $0 \leq i \leq p$ and  $m \in M_{n_i}$} \nonumber
  \end{eqnarray}
  These constraints check whether it is possible to reach some valuation in the
  target polytope while satisfying the guard and constraints of the WSHA, while
  exploiting the fact that modes of same rank can be applied an arbitrary number
  of time in an arbitrary order.
  The proof for this claim is similar to the proof for the CMS, and hence
  omitted. 

To show NP-hardness we reduce the \emph{subset-sum problem} to solving the
reachability problem in a WSHA.  
Formally, given $A$, a non-empty set of $n$  integers and another integer $k$,
the \emph{subset-sum problem} is to determine if there is a non-empty subset $T
\subseteq A$ that sums to $k$. 
Given the set $A$ and the integer $k$, we construct a WSHA $\Hh$ with $n+3$
variables $x_0,x_1, \dots, x_{n+2}$, $2n+1$ modes $m_0, \dots, m_{2n}$ and
$2n$ transitions, such that starting from a given valuation, a particular 
target polytope $\Tt$ is reachable in
the WSHA iff there is a non-empty subset $T \subseteq A$ that sums up to $k$.
Intuitively, the variable $x_0$ ensures that the variables
$x_1, x_2\ldots x_n$ are initialized with values $a_1, a_2, \ldots a_n$ (the elements
of $A$). The variable $x_{n+1}$ sums up the values of the elements in the chosen
subset $T$, and can be later compared with $k$.
The variable $x_{n+2}$ ensures that the set $T$ is non-empty (specifically when 
$k = 0$).
The rates in the modes $m_i (0\leq i\leq 2n)$ are given as follows ($\vr_i$ represents
$(F(m_i))$:
\begin{itemize}
\item[$\bullet$] $\vr_0(x_0) =1, \vr_0(x_{n+1}) = \vr_0(x_{n+2}) = 0$, and
$\vr_0(x_i) = a_i$, where $1\leq i\leq n$
\item[$\bullet$] $\vr_{2j-1}(x_j) = -a_j$, $\vr_{2j-1}(x_{n+1}) = a_j$,
$\vr_{2j-1}(x_{n+2}) = 1$, and $\vr_{2j-1}(x_0)$ = $\vr_{2j-1}(x_i) = 0$,
where $1\leq i\neq j\leq n$
\item[$\bullet$] $\vr_{2j}(x_j) = -a_j$, and $\vr_{2j}(x_0) = \vr_{2j}(x_i) =$
$\vr_{2j}(x_{n+1}) = \vr_{2j}(x_{n+2}) = 0$,
 where $1\leq i\neq j\leq n$
\end{itemize}
The transitions are as follows:
(i) There are edges from $m_0$ to $m_1$ and to $m_2$, and (ii)
 There are edges from $m_{2j-1}$ and from $m_2j$ to $m_{2j+1}$ and to $m_{2j+2}$, $1\leq j\leq n-1$. 
We  claim that the polytope $\Tt$, given by the set of points $\px$
such that $\px(0) = 1$, $\px(i) = 0$, $1\leq i\leq n$, $\px(n+1) = k$
and $\px(n+2) \in [1,n]$, is reachable from the point $\vzero$
iff there is a non-empty subset $T$ of $A$ that sums up to $k$.
Figure \ref{nphard-main} gives an illustration of the WSHA construction for a
set $\{1,2,-3\}$.  
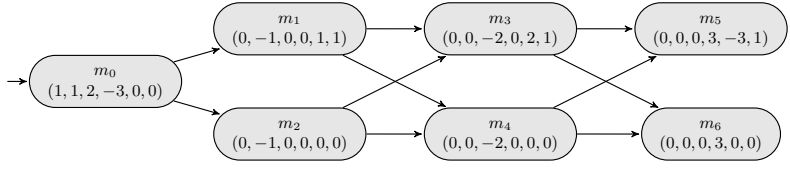
\begin{figure}[t]
\begin{center}
  \scalebox{0.7}{
\begin{tikzpicture}[->,>=stealth',shorten >=1pt,auto,node distance=1.8cm,
  semithick]
  \tikzstyle{every state}=[fill=black!10!white,minimum size=3em,rounded rectangle]
  \node[initial,state,initial where = left, initial text={}] at (0.5, 0) (A0) {$\begin{array}{c}m_0 \\ (1,1,2,-3,0,0) \end{array}$} ;
  \node[state] at (4, 1) (A1) {$\begin{array}{c}m_1 \\ (0,-1,0,0,1,1) \end{array}$} ;
  \node[state] at (4, -1) (A) {$\begin{array}{c} m_2 \\ (0,-1,0,0,0,0) \end{array}$} ;
  \node[state] at (8, 1) (B1) {$\begin{array}{c}m_3 \\ (0,0,-2,0,2,1) \end{array}$} ;
  \node[state] at (8 , -1) (B) {$\begin{array}{c} m_4 \\ (0,0,-2,0,0,0) \end{array}$} ;
  \node[state] at (12, 1) (B2) {$\begin{array}{c}m_5 \\ (0,0,0,3,-3,
      1) \end{array}$} ;
  \node[state] at (12, -1) (C) {$\begin{array}{c} m_6 \\ (0,0,0,3,0,0) \end{array}$} ;
  
      \path (A0) edge node [above]{} 
                    (A);
                    \path (A0) edge node [above]{} 
                    (A1);
      \path (A1) edge node [above]{} 
                    (B);
                    \path (A1) edge node [above]{} 
                    (B1);
     \path (A) edge node [above]{} 
                    (B);
                    \path (A) edge node [above]{} 
                    (B1);
     \path (B1) edge node [above]{} 
                    (B2);
                    \path (B1) edge node [above]{} 
                    (C);
\path (B) edge node [above]{} 
                    (B2);
       \path (B) edge node [above]{} 
                    (C);
 \end{tikzpicture}
 }
\caption{Constructed WSHA for a set $\{1,2,-3\}$}
\vspace{-2em}
\label{nphard-main}
\end{center}
\end{figure}
The detailed proof can be found in the appendix.
Due to the lack of space, the proof for the schedulability is moved to
  the appendix.
\qed 
\end{proof}

\begin{corollary}
  \label{thm:wsha-ltl-mc}
  The LTL model-checking problem for WSHA is PSPACE-complete.
\end{corollary}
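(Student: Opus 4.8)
The plan is to prove the two directions separately: \PSPACE-hardness by a direct reduction, and \PSPACE-membership by the automata-theoretic method layered on top of the linear-programming machinery already developed for Theorem~\ref{thm:wsha-reach-np-hard}. For hardness I would reduce from LTL model checking over finite Kripke structures, which is \PSPACEC~\cite{SC85}. Given a finite graph, I build a WSHA whose mode-transition graph is exactly that graph, placing all modes in a single rank (one CMS) with a fixed bounded open safety set and every flow vector equal to $\vzero$. Then any infinite path is realised by the run that keeps the (constant) valuation in the safety set and assigns unit delays, which is non-Zeno; conversely every non-Zeno trace projects onto an infinite path. Thus $\Hh$ and the graph have identical trace sets, so the two model-checking instances coincide and hardness transfers.

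For membership I would negate $\phi$, build a B\"uchi automaton $\mathcal{B}$ for $\neg\phi$ whose states are representable in space $O(|\phi|)$, and form the product $P$ of the mode-transition graph of $\Hh$ with $\mathcal{B}$, whose states $(m,q)$ are of polynomial size. A non-Zeno trace violating $\phi$ is exactly an accepting run of $P$ that is realisable as a non-Zeno run of $\Hh$, so it suffices to decide this ``realisable non-emptiness'' of $P$ in \PSPACE. Here I exploit the WSHA structure: since ranks are non-decreasing along every run (as already observed, $\varrho(m_i)\le\varrho(m_j)$ for $i\le j$), any candidate accepting run splits into an ascending finite prefix that climbs through the ranks and an infinite suffix confined to a single final CMS $M_{n_p}$. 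It therefore suffices to search for an accepting lasso of $P$ whose loop lies inside one rank and visits an accepting state of $\mathcal{B}$.

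The realisability and non-Zenoness of such a lasso I would reduce to linear-program feasibility, reusing Theorems~\ref{thm:mms-reach} and~\ref{thm:mms-sched}. For the prefix, realisability is exactly the reachability linear program from the proof of Theorem~\ref{thm:wsha-reach-np-hard}, written over the net time charged to each mode of each rank-segment (polynomially many variables), with target some interior point of $S_{n_p}$. For the loop, non-Zeno confinement to the bounded open set $S_{n_p}$ is, by Theorem~\ref{thm:mms-sched}, equivalent to feasibility of the schedulability linear program over the modes occurring in the loop: a convex combination of their rates equal to $\vzero$ lets us hover near an interior point with positive delays, repeat the loop forever with total elapsed time diverging, and hence visit the accepting $\mathcal{B}$-state infinitely often. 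For the \PSPACE{} bound I note that the nondeterministic lasso search tracks only polynomial-size data---the current state $(m,q)$, the run type (at most $|M|$ ranks and rank-transition actions), a bitmask of loop modes, and a flag recording an accepting visit---so the two linear programs can be assembled and tested in \PTIME{} even though the full walk may be exponentially long. Since the search itself is \NLOGS{} in the exponential size of $P$, the whole procedure runs in \PSPACE.

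The main obstacle I anticipate is the coupling between trace generation and geometric realisability: the discrete product walk must simultaneously drive $\mathcal{B}$ through the prescribed labels and be realisable by a net displacement satisfying the reachability program, yet the program may demand positive delay in a mode the chosen walk does not visit. I would resolve this using the defining freedom of a CMS---within a rank the guards are $\top$, no resets occur, the modes form a strongly connected component, and zero-delay transitions are admissible. Hence any mode the reachability program charges with positive delay can be spliced into the walk by a detour inside the current rank without affecting realisability, while modes used only to steer $\mathcal{B}$ may be taken with zero delay so they leave the valuation unperturbed. This decouples the two requirements and shows that the net-time linear programs faithfully characterise realisability of the guessed lasso; combined with the hardness reduction this yields \PSPACEC.
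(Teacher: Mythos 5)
Your proposal follows essentially the same route as the paper: negate $\phi$, take the product with a B\"uchi automaton for $\neg\phi$, observe the product is still a WSHA, and reduce to (Büchi-)schedulability decided in polynomial space; the paper's own proof is in fact terser than yours and leaves implicit the on-the-fly lasso search and the acceptance-condition bookkeeping that you spell out via the reachability and schedulability linear programs. One small repair is needed in your hardness direction: placing all modes of an arbitrary Kripke structure in a single rank violates the WSHA requirement that same-rank modes form a strongly connected component, so you should instead assign ranks by the topological order of the SCC decomposition of the graph (each SCC, with $\top$ guards, no resets, and zero rates, is a legitimate CMS); with that adjustment the reduction goes through.
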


We also observe that CTL model checking for weak singular hybrid automata is
already hard for \PSPACE{} by using a reduction from subset sum games~\cite{FJ13}.
\begin{theorem}
   \label{thm:ctl-weak-sha}
   CTL model checking of weak SHAs with two clock variables is \PSPACE{}-hard. 
 \end{theorem}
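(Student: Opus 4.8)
The plan is to reduce from the \emph{subset-sum game}, the two-player, turn-based game shown \PSPACE{}-complete by Fearnley and Jurdzi\'nski~\cite{FJ13} on the way to proving reachability for two-clock timed automata \PSPACE{}-complete. In such a game, play proceeds through a fixed number $n$ of rounds; in each round a running total is updated by an amount chosen from a finite set, with the existential player (\textsc{Eve}) owning the odd rounds and the universal player (\textsc{Adam}) owning the even rounds, and \textsc{Eve} wins a play iff the final total equals a prescribed target. The guiding observation is that the alternation between the two players is exactly the alternation between the branching-time modalities $\E$ and $\A$ of CTL: an \textsc{Eve}-position, which is winning iff \emph{some} move leads to a winning continuation, is captured by an $\E$-subformula, whereas an \textsc{Adam}-position, which is winning iff \emph{every} legal move leads to a winning continuation, is captured by an $\A$-subformula. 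This is what makes CTL (rather than mere reachability, i.e.\ $\E$ alone, as in~\cite{FJ13}) the right logic, since the adversary is obtained ``for free'' from $\A$.

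First I would build, from a subset-sum game instance, a two-clock WSHA $\Hh$ whose modes are partitioned into ranks $0,1,\dots,n$, one rank per round, with the rank function $\varrho$ forcing each run to pass monotonically through the rounds; this is exactly what the weakness condition permits, since no transition may decrease the rank. The running total is encoded in the two clock valuations. Within a single rank the automaton is a CMS, so the clocks only elapse uniformly and within-rank edges are vacuous; all the arithmetic (the integer additions and the final comparison with the target) is performed at the rank-increasing transitions, which alone carry guards and resets. Consequently a rank-increasing edge behaves exactly like an edge of a two-clock timed automaton, and I would reuse, essentially verbatim, the two-clock arithmetic gadgets of Fearnley and Jurdzi\'nski to realize the additions and the target test. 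Each round gadget is arranged so that the admissible value choices appear as distinct discrete transitions into distinct modes of the next rank; the mode entered by a legal move of round $j$ is labelled with a fresh proposition $p_{j}$, and the unique mode reachable exactly when the final total equals the target is labelled $\mathit{win}$.

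Next I would write a CTL formula $\phi$ threading through the rounds with strictly alternating quantifiers and with the propositional polarity chosen to filter out illegal moves:
\[
  \phi \;=\; \E\mathsf{X}\big(p_1 \wedge \A\mathsf{X}\big(p_2 \Rightarrow \cdots \E\mathsf{X}\big(p_{n-1}\wedge \A\mathsf{X}(p_n \Rightarrow \mathit{win})\big)\cdots\big)\big),
\]
where \textsc{Eve}'s rounds use $\E\mathsf{X}(p_j \wedge \cdots)$ and \textsc{Adam}'s rounds use $\A\mathsf{X}(p_j \Rightarrow \cdots)$ (each $\mathsf{X}$ abbreviating the bounded reachability across the finitely many internal transitions of a gadget, which is itself CTL-expressible via $\until$). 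I would then prove, by induction on the round index, that the game position after $j$ rounds is winning for \textsc{Eve} iff the corresponding configuration of $\Hh$ satisfies the suffix of $\phi$ beginning at quantifier $j{+}1$: the base case is the target comparison encoded by $\mathit{win}$, and the inductive step matches an $\E$- (resp.\ $\A$-) quantifier against an \textsc{Eve}- (resp.\ \textsc{Adam}-) round. Since $\Hh$ and $\phi$ are computable in polynomial time and the game is \PSPACE{}-hard, \PSPACE{}-hardness of CTL model checking for two-clock WSHA follows.

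The main obstacle is faithfulness of the dense, continuously-branching timed transition relation to the discrete game: a timed action permits an arbitrary real delay, so I must ensure that \textsc{Adam}, quantified by $\A$, cannot defeat \textsc{Eve} by a delay corresponding to no legal move, and dually that \textsc{Eve} cannot cheat under $\E$. The remedy is precisely the $\wedge$/$\Rightarrow$ alternation above together with the entry markers $p_j$: each gadget is designed so that only delays realizing a legitimate addition can satisfy the intermediate guards and reach a $p_j$-labelled mode, while every illegal delay lands in an unmarked mode. Under $\A\mathsf{X}(p_j \Rightarrow \cdots)$ an illegal \textsc{Adam} successor fails $p_j$ and so satisfies the implication vacuously, making it harmless; under $\E\mathsf{X}(p_j \wedge \cdots)$ an illegal \textsc{Eve} successor fails $p_j$ and so is useless, preventing cheating. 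A secondary point to verify is the interaction with the non-Zeno semantics used for infinite traces, but $\phi$ asserts only reachability of $\mathit{win}$ within the finitely many ranks, and the rank structure bounds the number of productive mode switches, so the relevant witnesses are finite runs that extend to non-Zeno behaviours and the semantic restriction is immaterial.
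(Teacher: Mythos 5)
Your proposal is correct and follows essentially the same route as the paper: a polynomial reduction from the Fearnley--Jurdzi\'nski subset-sum game to an acyclic (hence weak) two-clock automaton with per-round gadgets, one clock measuring the per-round delay against equality guards and one never-reset clock accumulating the total, checked against the target at the end, with correctness witnessed by a CTL formula that threads alternating $\E\bigcirc$/$\A\bigcirc$ quantifiers through the round labels. The only cosmetic difference is your $p_j$-polarity ($\wedge$ versus $\Rightarrow$) machinery for neutralising illegal delays, which the paper's construction does not need because its equality guards $x=A_i$, $x=B_i$ simply block every illegal delay from producing any successor at all.
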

\begin{proof}[Sketch]
  We give a polynomial reduction from subset-sum games. 
  A subsetsum game is played between an existential player and a universal
  player. 
  The game is specified by a pair $(\psi, T)$ where $T \in \N$ and $\psi$ is a list:
  $$\forall\{A_1,B_1\} \exists\{E_1,F_1\} \dots \forall \{A_n,B_n\}\exists \{E_n,F_n\}$$
  where $A_i, B_i, E_i, F_i$ are all natural numbers. 
  The game is played in rounds. 
  In the first round,  the universal player chooses an element from
  $\{A_1,B_1\}$, and the existential player responds by choosing a number from
  $\{E_1,F_1\}$. 
  In the next round, the universal player chooses an element from $\{A_2,B_2\}$,
  and the existential player responds  by choosing a number from $\{E_2,F_2\}$. 
  This pattern repeats for $n$ rounds, and two players this construct a sequence
  of number, and the existential player wins iff the sum of those numbers equals $T$. 
 
  For each set $\{A_i,B_i\}$, we construct a widget $W_{\forall_i}$ shown in the
  left side of Figure~ \ref{exists}.
  Similarly, for each set $\{E_i,F_i\}$, we construct a widget $W_{\exists_i}$
  shown in the right side of Figure~\ref{exists}.
  \begin{figure}[b]
    \begin{center}
      \begin{tikzpicture}[->,>=stealth',shorten >=1pt,auto,node distance=1.8cm,
        semithick]
        \tikzstyle{every state}=[fill=black!30!white,minimum size=3em,rounded rectangle]
        \node[state] (A1) {$Q_i$} ;
        \node[state] at (4,0) (C) {$P_i$} ;
        \path (A1) edge [bend left=30] node [above]{$x=A_i?$} 
        node [below]{$x:=0$}(C);
        \path (A1) edge [bend right=30]node  [above]{$x=B_i?$} 
        node [below]{$x:=0$}(C);
      \end{tikzpicture}
      \hfill
      \begin{tikzpicture}[->,>=stealth',shorten >=1pt,auto,node distance=1.8cm,
        semithick]
        \tikzstyle{every state}=[fill=black!30!white,minimum size=3em,rounded rectangle]
        \node[state] (A1) {$P_i$} ;
        \node[state] at (4,0) (C) {$Q_{i+1}$} ;
        \path (A1) edge [bend left=30] node [above]{$x=E_i?$} 
        node [below]{$x:=0$}(C);
        \path (A1) edge [bend right=30]node  [above]{$x=F_i?$} 
        node [below]{$x:=0$}(C);
      \end{tikzpicture}
      \caption{Widgets $W_{\forall_i}$ (left) and $W_{\exists_i} (right)$}
      \label{exists}
    \end{center}
  \end{figure}
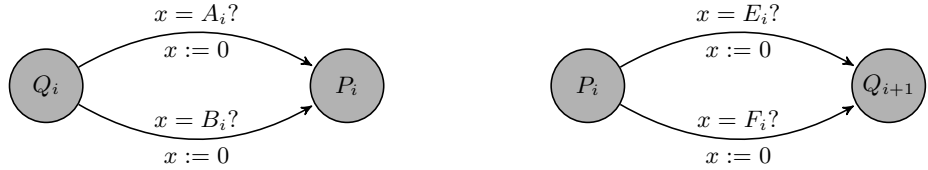
  
  The WSHA ${\cal A}$ constructed has 2 clocks $x,y$ and 
  is obtained by connecting  the last node of widget $W_{\forall_i}$ with 
  the starting node of widget $W_{\exists_i}$ for $1 \leq i \leq n$, and 
  by connecting the last node of $W_{\exists_i}$ with the initial node of
  $W_{\forall_{i+1}}$, $1 \leq i \leq n-1$. 
  The last node of $W_{\exists_n}$ (labeled $P_n$) is connected to a node
  $\textsc{End}$ with a guard $y=T$. 
  The clock $y$ is never reset in any of the widgets and accumulates the
  computed sum.  
  Notice that the resulting timed automaton is a WSHA since it is acyclic. 
  Let $i_0$ be the initial mode of this WSHA. 
  The unique initial mode of the WSHA is labeled with $Q_1$. 
  It is easy to see that the above WSHA can be constructed 
  in polynomial time; moreover, the constructed WSHA has 
  $3n+2$ modes, $4n+1$ edges and 2 clocks. 
  We now give a CTL formula $\varphi$ of size $\mathcal{O}(n)$ given by 
  $[Q_1 \wedge \A\bigcirc(P_1 \wedge \E\bigcirc (Q_2 \wedge \dots \A\bigcirc(P_n
  \wedge \E\bigcirc \textsc{End})))]$   
  It can be seen that the existential player wins the subsetsum game iff 
  ${\cal A},i_0 \models \varphi$.
\qed
\end{proof}

 We conjecture that the problem can be solved in \PSPACE{}, however decidability
 of the problem is currently open.

\section{Undecidable variants of WSHA}
\label{sec:undec}
In this section, we present two variants  of WSHA and show that both lead to
undecidability of the reachability problem. 

\begin{theorem}
  \label{thm:sha-undec-three-var}
  The reachability problem is undecidable for three variable WSHAs with discrete
  updates. \todo{What is meant by "Discrete updates"?}
 \end{theorem}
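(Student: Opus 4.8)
The plan is to reduce from the halting problem for two-counter Minsky machines~\cite{Min67}, which is exactly the source of undecidability already exploited for unrestricted SHA in Theorem~\ref{thm:SHA-undec-reach}. The reason plain WSHA is decidable (indeed in \NP{}) is that the rank function forces every run to traverse only finitely many CMS blocks, so the order-insensitive, guard-free and reset-free dynamics inside each block collapses into a single linear program and every run type has polynomial length. To recover the power of a counter machine I must re-introduce a genuine control loop, and this is precisely what the discrete-update feature buys: relaxing the intra-rank restriction so that a transition inside a single strongly connected rank may carry a guard and a discrete update lets that rank be re-entered unboundedly often and used to simulate the whole computation step by step, while still respecting rank-monotonicity across the rest of the automaton.

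Concretely, I would keep two variables $x_1,x_2$ to hold the counter values $c_1,c_2$ and use the third variable $x_3$ as a clock metering one simulated instruction. An increment of $c_1$ is realised by spending time in a mode whose rate vector has $+1$ in the $x_1$-component until the metering clock reaches $1$, at which point a guarded update transition fires and performs $x_3 := 0$; a decrement is symmetric with rate $-1$; a zero-test on $c_1$ is implemented by a branch whose guard forces $x_1 = 0$ (read off, up to the usual open-set care, against the bounded open invariant $S_i$). The discrete update on the looping transition carries $x_1,x_2$ forward unchanged while re-initialising $x_3$, so the next iteration starts in the canonical configuration for the next instruction. The machine halts iff a designated target valuation (the halt mode together with $x_3=0$) is reachable, which gives the reduction and hence undecidability.

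The main obstacle is \emph{rigidity}. Inside one rank the system behaves like a CMS, where the scheduler may take an arbitrary nonnegative combination of the available rate vectors in any order; placed carelessly, these spurious combinations would corrupt the counter values and let a run ``cheat'' a zero-test. I must therefore design the rates, the per-rank open safety polytope $S_i$, and the guards of the update transitions so that the only schedules keeping the run inside every invariant and satisfying every guard are exactly those that faithfully execute the intended instruction. In practice this means isolating each increment, decrement and test into its own mode whose rate vector has a single controlled component, and using the metering clock as an invariant barrier (for instance $0 < x_3 < 1$ within a step) so that deviating from the intended schedule either violates $S_i$ before the guard $x_3 = 1$ can be met, or fails that guard altogether. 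Establishing that this forces a canonical, error-free traversal, and confirming that three variables genuinely suffice for the two counters plus the metering clock, is the technically delicate part; once the simulation is faithful, undecidability of halting transfers directly to the reachability problem.
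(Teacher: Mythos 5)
Your high-level plan coincides with the paper's: reduce from the halting problem for two-counter Minsky machines, with two variables carrying the counters and a third acting as a timing/bookkeeping variable, and with the discrete-update feature supplying the control loop that plain WSHA forbids. The difference, and the problem, lies in your encoding of the counters.

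You propose to store the counter values \emph{directly}, i.e.\ $x_1=c_1$ and $x_2=c_2$, incrementing by running at rate $+1$ for one metered time unit. But Minsky-machine counters are unbounded, so $x_1$ and $x_2$ are unbounded along the simulating run, and this is incompatible with the definition of a WSHA: every rank $M_i$ must carry a \emph{bounded} open polytope $S_i$ as its invariant, and a configuration is only valid if the valuation lies in the invariant. No bounded $S_i$ can contain all reachable counter values (and restricting to machines with bounded counters makes halting decidable, so the reduction would prove nothing). The boundedness is not a technicality you can wave away with ``open-set care'': the paper's own construction leans on it twice, once to stay inside the WSHA format and once as the mechanism that \emph{forces} exact dwell times (overshooting or undershooting makes a variable leave $[0,5]$ or $[0,1]$). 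The standard repair — and what the paper actually does — is the bounded encoding $y=5-\frac{1}{2^{c_1}}$, $z=5-\frac{1}{2^{c_2}}$ with $0\le y,z\le 5$ and $0\le x\le 1$. Under that encoding an increment is no longer ``run at rate $+1$ for one unit'' but ``halve the fractional part $\frac{1}{2^{c_1}}$,'' which is why the paper needs the multi-mode widgets with rates like $-6,-30,-3$ and additive updates $y':=y+5$, and why the zero test becomes a widget that distinguishes $y=4$ from $y>4$ via invariant violations rather than a direct guard $x_1=0$. Your rigidity discussion (single-component rates, equality guards on the metering clock) is a reasonable way to tame intra-rank nondeterminism, but it does not address the unboundedness, which is where the proposal as written fails; adopting the bounded fractional encoding essentially forces you back onto the paper's construction.
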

\begin{proof}[Sketch]
\todo{the proof of Th.7 (even in appendix) does not justify that the constructed
automaton is a WSHA}
\todo{how "WSHAs with discrete
updates" are really a subclass of standard singular HA}
  We show a reduction from the halting problem for two-counter Minsky machines
  ${\cal M}$. 
  The variables $x,y,z$ of the SHA  have global invariants $0 \leq x \leq 1$, $0
  \leq y,z\leq 5$ respectively. 
  The counters $c_1, c_2$ of two-counter machine are encoded in variables $y$ and
  $z$ as 
  $  y=5-\frac{1}{2^{c_1}} \text{ and  } z=5-\frac{1}{2^{c_2}}$.
  To begin, we have $c_1=c_2=0$, hence $y=z=4$, and $x{=}0$.
  The rates of $x,y,z$ are indicated by 3-tuples inside the modes of the SHA. 
  The discrete updates on $x,y,z$ are indicated on the transitions.
  We construct widgets for each of the increment/decrement and zero check instructions. 
  Each widget begins with $x{=}0,y=5-\frac{1}{2^{c_1}}$ and
  $z=5-\frac{1}{2^{c_2}}$, where $c_1,c_2$ are the counter values. 
  \begin{itemize}
  \item {\bf (Increment and Decrement Instructions).}
    Let us first consider increment instruction 
    $    l : c_1 := c_1 + 1 \text{ goto } l'$.
    The Figure~\ref{inc-c1-main} depicts the increment widget. 
    This widget starts with a mode labeled $l$, and  ends in a mode labeled $l'$. 
    This widget can be modified to simulate the instructions increment counter $c_2$,
    decrement counters $c_1$ and decrement counter $c_2$,respectively,  by changing  the cost rate of $z$ at
    $B$ to -3, the cost rate of $y$ at $B$ to -12, and the cost rate of $z$ at $B$
    to $-12$, respectively.  
    
  \item {\bf (Zero Check Instruction).}
    We next consider the zero check instruction:
    $l: \text{ if } c_1=0 \text{ goto } l' \text{ else  goto } l''$.
    The widget of Figure~\ref{z-c1-main} depicts the zero check widget. 
    The zero check widget  starts in a mode $l$ and reaches either the mode $l'$ or mode $l''$.
    
\end{itemize}
 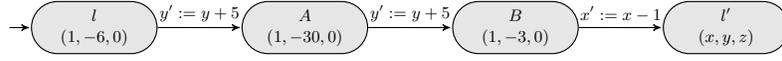
\begin{figure}[t]
  \begin{center}
  \scalebox{0.7}{
  \begin{tikzpicture}[->,>=stealth',shorten >=1pt,auto,node distance=1.8cm,
    semithick]
    \tikzstyle{every state}=[fill=black!10!white,minimum width=8em,rounded rectangle]
    \node[initial,state, initial text={}] at (-4,0) (A1) {$\begin{array}{c}l \\ (1,-6,0)\end{array}$} ;
    \node[state] at (0,0) (A) {$\begin{array}{c}A \\ (1,-30,0)\end{array}$} ;
    \node[state] at (4,0) (B) {$\begin{array}{c}B \\ (1,-3,0)\end{array}$} ;
    \node[state] at (8,0) (C) {$\begin{array}{c}l' \\ (x,y,z)\end{array}$} ;
    \path (A1) edge node [above]{$y' := y + 5$}(A);
    \path (A) edge node [above]{$y' := y + 5$}(B);
    \path (B) edge node [above]{$x' := x - 1$}(C); 
  \end{tikzpicture}
  }
  \caption{Increment $c_1$ widget}
\vspace{-2em}
   \label{inc-c1-main}
  \end{center}
\end{figure}

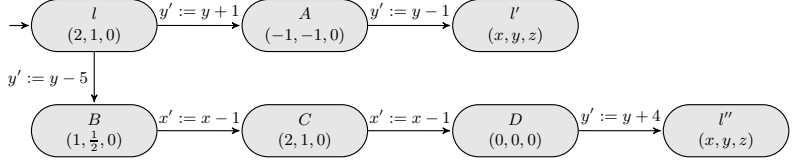
\begin{figure}[h]
  \begin{center}
  \scalebox{0.7}{
  \begin{tikzpicture}[->,>=stealth',shorten >=1pt,auto,node distance=1.8cm,
    semithick]
    \tikzstyle{every state}=[fill=black!10!white,minimum width=8em,rounded rectangle]
    
    \node[initial,state, initial text={}] at (0,0) (A1) {$\begin{array}{c}l \\ (2,1,0)\end{array}$} ;
    \node[state] at (4,0) (A) {$\begin{array}{c}A \\ (-1,-1,0)\end{array}$} ;
    \node[state] at (8,0) (A2) {$\begin{array}{c}l' \\ (x,y,z)\end{array}$} ;

    \node[state] at (0,-2) (E) {$\begin{array}{c}B \\(1,\frac{1}{2},0) \end{array}$} ;
    \node[state] at (4,-2) (B) {$\begin{array}{c}C \\(2,1,0) \end{array}$} ;
    \node[state] at (8,-2) (B2) {$\begin{array}{c}D \\ (0,0,0)\end{array}$} ;
    \node[state] at (12,-2) (C2) {$\begin{array}{c}l'' \\ (x,y,z)\end{array}$} ;
    
    \path (A1) edge node [above]{$y' := y + 1$}(A);
    \path (A) edge node [above]{$y' := y - 1$}(A2);
    \path (B) edge node [above]{$x' := x - 1$}(B2);
    \path (B2) edge node [above]{$y' := y + 4$}(C2);
    \path (A1) edge node [left]{$y' := y - 5$}(E);
    \path (E) edge node [above]{$x' := x - 1$}(B);
  \end{tikzpicture}
}
\caption{Zero Check widget}
\vspace{-2em}
\label{z-c1-main}
\end{center}
\end{figure}
It is straightforward to see that the modules for increment, decrement, and zero
check simulate the two counter machine. There is a mode HALT corresponding to the HALT  instruction. 
The halting problem for two counter machines is thus  reduced to the
reachability of the mode HALT. 
\qed
\end{proof} 

The next variant that we consider is weak singular hybrid automata
extended with discrete updates on the variables even inside the strongly
connected components.\todo{Discrete updates were there in the last theorem}
We prove this result by showing that even CMS with one unrestricted
clock variable lead to undecidability.  
 \begin{theorem}
   \label{thm:sha-undec-three-var-one-clock}
 The reachability problem is undecidable for CMS with three variables and one
unrestricted clock. 
 \end{theorem}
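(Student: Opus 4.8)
The plan is to reduce the halting problem for two-counter Minsky machines to the reachability problem, reusing the overall architecture of the proof of Theorem~\ref{thm:sha-undec-three-var} but transferring the bookkeeping that was previously carried out by discrete variable updates onto the single unrestricted clock. I would keep the two non-clock variables $y$ and $z$ under the full CMS discipline---a single bounded open polytope $S$ as the common invariant of all modes, no guards and no resets on $y$ or $z$, and free switching among the modes of a gadget---and use the clock $x$, which may be tested for equality and reset, to impose the sequencing and the exact timing that a faithful simulation needs. The counters $c_1,c_2$ would be encoded in $y,z$ by a bounded encoding such as $y = 5 - 2^{-c_1}$ and $z = 5 - 2^{-c_2}$, so that the encoded configurations always lie strictly inside $S=(0,5)^2$ in the $(y,z)$-plane, an increment of $c_1$ corresponds to halving $2^{-c_1}$, and a decrement corresponds to doubling it.

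First I would build the increment and decrement gadgets. The idea is that each gadget is entered with $x$ reset to $0$ and can only be left through a transition guarded by $x=1$, which again resets $x$, so that the clock forces the total time spent inside the gadget to be exactly one unit. Inside the gadget the multi-mode freedom lets the simulation distribute this unit of time among a fixed set of constant-rate modes; choosing the rates as in the construction of Theorem~\ref{thm:sha-undec-three-var} (for instance the $y$-rates $-6,-30,-3$ used in Figure~\ref{inc-c1-main}) makes the schedule arithmetic value-independent, in the sense that the durations realizing the exact halving of $2^{-c_1}$ are precisely the ones whose sum is $1$; thus the clock constraint $x=1$ and the correct update become equivalent. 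The intermediate ``bounce'' points that in Theorem~\ref{thm:sha-undec-three-var} were pinned by the discrete update $y := y+5$ together with the invariant $y\le 5$ would instead be pinned by intermediate clock guards $x=a$ with matching resets, so that each phase runs for a clock-measured, hence exact, duration.

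The main obstacle will be the zero-test gadget and, more generally, making the simulation cheat-proof, because in a CMS we are not allowed to guard or reset $y$ or $z$ directly: every test on the counters must be expressed through equality guards on $x$. I would therefore design the zero-test so that the branch corresponding to $c_1=0$ is time-feasible---its clock guards being simultaneously satisfiable while $(y,z)$ stays inside the open safety set---exactly when $y$ holds the encoding of $c_1=0$, and infeasible otherwise; concretely this amounts to a short sub-automaton in which a clock equality can be met only when $2^{-c_1}=1$. Care is needed because $S$ is open, so no run may land on its boundary: the critical exact values must be forced by clock equalities rather than by driving $y$ onto the boundary of $S$, and the target polytope for the HALT mode must be stated with the same strictness, invoking the scaling argument from the proof of Theorem~\ref{thm:mms-reach} to stay strictly inside $S$.

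Finally I would verify faithfulness and that the object built is legitimately a \emph{CMS with one unrestricted clock}: all modes share the single open bounded invariant $S$ on $(y,z)$, the variables $y,z$ are never individually guarded or reset, and only $x$ carries guards and resets. Composing the gadgets in the shape of the given Minsky machine, the HALT encoding is reachable iff the machine halts; since the construction is polynomial and two-counter halting is undecidable, undecidability of reachability for this model follows, and \emph{a fortiori} for the more general WSHA extended with discrete updates inside its strongly connected components.
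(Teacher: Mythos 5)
Your overall architecture matches the paper's: reduce from two-counter Minsky machines, encode the counters as $5-2^{-c_i}$ in bounded CMS variables, and use the single unrestricted clock (equality guards plus resets) together with the invariants to force the gadgets to behave deterministically. The zero-check idea --- make one branch time-infeasible because the forced unit-length dwell times would push the counter variable out of its invariant unless $c_1=0$ --- is essentially the paper's gadget and is sound.

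The genuine gap is in your increment/decrement gadget, and it comes from discarding the third non-clock variable. The theorem is stated for \emph{three} variables plus one clock, and the paper's construction needs all three: besides $x_1,x_2$ it keeps an auxiliary variable $y$ with invariant $0\le y\le 1$ and rates $\pm 1$, whose sole job is to \emph{copy a counter-dependent duration} $k=\tfrac{1}{6\cdot 2^{c_1}}$ from the phase where $x_1$ is driven to its invariant boundary (at rate $6$) to the phase where $x_1$ is driven back down at rate $-3$, which is exactly what realizes the halving. Your substitute mechanisms cannot do this. Intermediate clock guards $x=a$ pin only \emph{constant} elapsed times, whereas the durations that must be pinned depend on $c_1$ and are not known in advance. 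And your claim that ``the durations realizing the exact halving are precisely the ones whose sum is $1$'' is false: the guard $x=1$ constrains only the single linear combination $\sum_m t_m=1$, while the final value of the counter variable is a different linear combination $\sum_m F(m)(x_1)\,t_m$; with two or more modes of distinct $x_1$-rates this leaves a continuum of reachable final values, so a cheating run can land anywhere in an interval rather than exactly at $5-2^{-(c_1+1)}$. In the paper's Theorem~\ref{thm:sha-undec-three-var} the intermediate values were pinned by discrete updates interacting with the invariants; here they are pinned by the auxiliary variable $y$ being forced to return exactly to $1$ (respectively $0$) by its own invariant under the clock-forced unit dwells. Without either mechanism --- and you have forbidden yourself both --- the simulation is not cheat-proof, so the reduction as sketched does not go through. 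Reinstating the third variable with the paper's $\pm1$ rate pattern repairs the argument and is exactly what the stated variable count is for.
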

 \begin{proof}[Sketch]
  We simulate a two counter machine using a CMS with 3 variables and one clock. 
 The three variables $x_1,x_2,y$ have global invariants $0 \leq x_1, x_2 \leq 5$ and $0 \leq y \leq 1$ 
 respectively. The clock variable is $x$. 
 The counters $c_1,c_2$ are encoded as $x_1=5-\frac{1}{2^{c_1}}$, $x_2=5-\frac{1}{2^{c_2}}$. 
 At the beginning of each widget, we have $x_1=5-\frac{1}{2^{c_1}}$, $x_2=5-\frac{1}{2^{c_2}}$ and $y=1$, where 
 $c_1, c_2$ are the current counter values.
 The gadget simulating an increment instruction 
 is shown below. 
\begin{center}
 \scalebox{0.6}{
   \begin{tikzpicture}[->,>=stealth',shorten >=1pt,auto,node distance=1.8cm, semithick]
     \tikzstyle{every state}=[fill=black!10!white,minimum size=0em,rounded rectangle]
     \node[initial, initial where=above,state, initial text={}] at (-10.5,0) (A1) {$\begin{array}{c}l\\(6,0,-1)\end{array}$} ;
     \node[state] at (-7,0) (A) {$\begin{array}{c}A \\ (-5,0,0)\end{array}$} ;
     \node[state] at (-4,0) (B) {$\begin{array}{c}B \\ (5,0,0)\end{array}$} ;
   \node[state] at (-1,0) (C) {$\begin{array}{c}C \\ (-3,0,1)\end{array}$} ;
   \node[state] at (2,0) (D) {$\begin{array}{c}D \\ (0,0,-1)\end{array}$} ;
    \node[state] at (5,0) (E) {$\begin{array}{c}E \\ (0,0,1)\end{array}$} ;
      \node[state] at (8,0) (F) {$\begin{array}{c}l' \\ (x_1,x_2,y)\end{array}$} ;
    \path (A1) edge node [above]{$0{<}x{<}1$}
		    node [below]{$\set{x}$}(A);
    \path (A) edge node [above]{$x{=}1$}
                    node [below]{$\set{x}$}(B);
\path (B) edge node [above]{$x{=}1$}
                    node [below]{$\set{x}$}(C);
\path (C) edge  node [below]{$\set{x}$}(D);
\path (D) edge node [above]{$x{=}1$}
                    node [below]{$\set{x}$}(E);
                    
   \path (E) edge node [above]{$x{=}1$}
                    node [below]{$\set{x}$}(F);
                 
  \end{tikzpicture}
  }
\end{center}
 
The decrement gadget is similar to the increment gadget.   
The gadget for a zero check 
is given 
in the figure below.
Observe that starting at mode $l$ with $y=1$ and $x_1=5-\frac{1}{2^{c_1}}$,    
the gadget in this gadget ensures that we reach $l'$ iff $c_1=0$,
and otherwise reaches $l''$. 
\begin{center}
 \scalebox{0.6}{
\begin{tikzpicture}[->,>=stealth',shorten >=1pt,auto,node distance=1.8cm,
  semithick]
\tikzstyle{every state}=[fill=black!10!white,minimum size=3em,rounded rectangle]
 \node[initial,state,initial where = above, initial text={}] at (0,0) (A1) {$\begin{array}{c}l \\ (1,0,0)\end{array}$} ;
  \node[state] at (3,0) (A) {$\begin{array}{c}A \\ (-1,0,0)\end{array}$} ;
   \node[state] at (6,0) (A2) {$\begin{array}{c}l' \\ (x_1,x_2,y)\end{array}$} ;
  \node[state] at (-3,0) (E) {$\begin{array}{c}B \\(1,0,-1) \end{array}$} ;
    \node[state] at (-6,0) (B) {$\begin{array}{c}C \\(-5,0,0) \end{array}$} ;
   \node[state] at (-6,-2) (B2) {$\begin{array}{c}D \\ (5,0,0)\end{array}$} ;
   \node[state] at (-3,-2) (C2) {$\begin{array}{c}E \\ (-1,0,1)\end{array}$} ;
    \node[state] at (0,-2) (F) {$\begin{array}{c}F \\ (0,0,-1)\end{array}$} ;
     \node[state] at (3,-2) (G) {$\begin{array}{c}G \\ (0,0,1)\end{array}$} ;
     \node[state] at (6,-2) (H) {$\begin{array}{c}l''\\ (x_1,x_2,y)\end{array}$} ;
    \path (A1) edge node [above]{$x{=}1$}
                         node [below]{$\set{x}$}(A);
\path (A) edge node [above]{$x{=}1$}
                         node [below]{$\set{x}$}(A2);
   \path (A1) edge node [above]{$x{=}0$}(E);
\path (E) edge node [above]{$x{<}1$}
                         node [below]{$\set{x}$}(B);
\path (B) edge node [left]{$x{=}1$}
                         node [right]{$\set{x}$}(B2);
\path (B2) edge node [above]{$x{=}1$}
                         node [below]{$\set{x}$}(C2);
\path (C2) edge node [above]{$x{<}1$}
                         node [below]{$\set{x}$}(F);
\path (F) edge node [above]{$x{=}1$}
                         node [below]{$\set{x}$}(G);
\path (G) edge node [above]{$x{=}1$}
                         node [below]{$\set{x}$}(H);
\end{tikzpicture}
}\end{center}
A complete proof can be found in the Appendix. 
\qed\end{proof}

\section{Conclusion}
\label{sec:concl}
We introduced weak singular hybrid automata and showed that verification
problems like reachability and schedulability are $\NPC{}$, while LTL
property checking is $\PSPACEC{}$.     
Extending the model with either unrestricted variable updates or with a single
unrestricted clock variable render the reachablity problem undecidable.  
We showed PSPACE-hardness of the CTL model checking problem, but the exact
complexity of the problem remains open.

\bibliographystyle{plain}
\bibliography{papers}

\newpage
\appendix
\section{Temporal Logic Model Checking} 
Model-checking---pioneered by Clarke, Sifakis and Emerson~\cite{CES86}---is 
widely used automated verification framework that, given a formal description of
a system and a property, systematically checks whether this property holds for a
given state of the system model.  

The linear temporal logic, LTL, and the computational tree logic (CTL) provide
formal languages to specify more involved nesting of such properties with ease.  

The first step in introducing logic to specify the properties of an SHA is to
specify properties of interest as propositions. 
A more general way to assign propositions with an SHA is to introduce
propositions on the valuations of the variables. 
However, in this paper, we consider the propositions given on modes of the SHA.   
 
A Kripke singular hybrid automaton (KSHA) is a tuple $(\Hh, \Pp, L)$ where
$\Hh$ is an SHA,  $\Pp$ is a finite set of atomic propositions, and $L: M \to
2^\Pp$ is a labeling function that labels the modes of the SHA with a subset of
atomic propositions $\Pp$. 
Given a KSHA $(\Hh, \Pp, L)$ and an infinite run 
$r=\seq{(m_0, \nu_0), (t_1, a_1), (m_1, \nu_1), \ldots}$
of $\Hh$, we define a trace corresponding to $r$, denoted as 
$\Trace(r)$, as the sequence 
\[
\seq{L(m_0), L(m_1), L(m_2),  \dots L(m_n), \ldots}.
\]
Let $\Trace(\Hh, \Pp, L)$ be the set of traces of the KSHA $\Hh$.
For a trace $\sigma = \seq{P_0, P_1, \ldots, P_n, \ldots}  \in \Trace(\Hh, P,
L)$ we write $\sigma[i]=\seq{P_i, P_{i+1}, \ldots}$ for the suffix of the trace
starting at the index $i \geq 0$. 

We now define the syntax and semantics of LTL and CTL. 

\subsection{Linear Temporal Logic}
\begin{definition}[Linear Temporal Logic (Syntax)]
The set of valid LTL formulas over a set $\Pp$ of atomic propositions can be
  inductively defined as the following:
\begin{itemize}
\item $\top$ and $\bot$ are valid LTL formulas;
\item if $p \in \Pp$ then $p$ is a valid LTL formula;
\item if $\phi$ and $\psi$ are valid LTL formulas then so are $\neg \phi$, $\phi
  \wedge \psi$ and $\phi \lor \psi$; 
\item if $\phi$ and $\psi$ are valid LTL formulas then so are $\bigcirc\phi$,
  $\Diamond \phi$, $\square \phi$, and $\phi \until \psi$.
\end{itemize}
\end{definition}

We often use $\phi \Rightarrow \psi$ as a shorthand for $\neg \phi \lor \psi$.
Before we define the semantics of LTL formula formally, let us give an informal
description of the temporal operators $\bigcirc$, $\Diamond$, $\square$, and
$\until$. 
LTL formulas are interpreted over traces of KSHA. 
The formula $\bigcirc \phi$, read as next $\phi$, holds for a trace
$\sigma=\seq{P_0, P_1, P_2, \ldots}$ if $\psi$ holds for the trace $\sigma[1]$. 
The formula $\Diamond \phi$, read as eventually $\phi$, holds for a trace
$\sigma=\seq{P_0, P_1, P_2, \ldots}$ if there exists $i \geq 0$ such that the
formula $\psi$ holds for the trace $\sigma[i]$. 
The formula $\square \phi$, read as globally or always $\phi$, holds for a trace
$\sigma=\seq{P_0, P_1, P_2, \ldots}$ if for all $i \geq 0$ the formula $\psi$
holds for traces $\sigma[i]$. 
Finally, the formula $\phi \until \psi$, read as $\phi$ until $\psi$, holds for
a trace $\sigma=\seq{P_0, P_1, P_2, \ldots}$ if there is an index $i$ such that
$\psi$ holds for the trace $\sigma[i]$, and for every index $j$ before $i$ the
formula $\phi$ holds for the trace $\sigma[j]$, i.e the formula $\phi$ holds until
formula $\psi$ holds.
\begin{definition}[LTL Semantics]
 For a trace \\ $\sigma$=$\seq{P_0, P_1, P_2, \ldots}$ of a KSHA we write $\sigma
  \models \phi$ to say that the trace 
  $\sigma$ satisfies the formula $\phi$. 
  The satisfaction of LTL formulas is defined as follows:
  \begin{itemize}
  \item 
    $\sigma \models \top$ and $\sigma \not \models \bot$;
  \item 
    $\sigma \models p$ if $p \in P_0$;
  \item
    $\sigma \models \neg \phi$ if $\sigma \not \models \phi$;
  \item
    $\sigma \models \phi \wedge \psi$ if $\sigma \models \phi$ and $\sigma
    \models \psi$;
  \item
    $\sigma \models \phi \lor \psi$ if $\sigma \models \phi$ or $\sigma
    \models \psi$;    
  \item
    $\sigma \models \bigcirc\phi$ if $\sigma[1] \models \phi$;
  \item
    $\sigma \models \Diamond \phi$ if there exists $i \geq 0$ such that $\sigma[i]
    \models \phi$; 
  \item
    $\sigma \models \square \phi$ if for all $i \geq 0$ we have that $\sigma[i]
    \models \phi$; and
  \item
    $\sigma \models \phi \until \psi$ if there exists $i \geq 0$ such that
    $\sigma[i] \models \psi$, and for all $0 \leq j < i$ 
    $\sigma[j]\models \phi$. 
  \end{itemize}
\end{definition}

We consider only those runs of the KSHA that are non-Zeno, for satisfaction/refutation 
  of LTL formule.   For a Kripke singular hybrid automaton  $(\Hh, P, L)$, and an LTL formula $\phi$ we
  say that KSHA $(\Hh, P, L)$ satisfies LTL formula $\phi$, and we write $(\Hh,
  P, L) \models \phi$,  if for all traces of non-Zeno run  $\sigma \in \Trace(\Hh, P, L)$
  we have that $\sigma \models \phi$.






\subsection{Computational Tree Logic}
In this section we present CTL logic to reason with Kripke singular hybrid
automata. 
CTL formulas allows existential and universal quantification over properties of
trace originating from modes. 
Formally, the set of CTL formulas are defined in the following manner. 
\begin{definition}[Computational Tree Logic (Syntax)]
  The set of valid CTL formulas over a set $\Pp$ of atomic propositions can be
  inductively defined as the following:
  \begin{itemize}
  \item 
    $\top$, $\bot$ and $p \in \Pp$  are valid CTL mode formulae;
  \item 
    if $\phi$ and $\psi$ are valid CTL mode formulae then so are $\neg \phi$, $\phi
    \wedge \psi$, $\E \phi$  and $\A \phi$;
  \item 
    if $\phi$ and $\psi$ are valid mode formulae, then  $\bigcirc \phi$ and
    $\phi \until \psi$ are valid trace formulae.  
\end{itemize}
\end{definition}

\begin{definition}[CTL Semantics]
  CTL formulae are evaluated over the modes and traces of a Kripke Hybrid structure.
  Given a mode $m \in M$, 
  \begin{itemize}
  \item $m \models \top$ and $m\not \models \bot$;
  \item $m \models p$ if $p \in L(m)$;
  \item
    $m \models \neg \phi$ if $m \not \models \phi$;
  \item
    $m \models \phi \wedge \psi$ if $m \models \phi$ and $m
    \models \psi$;
  \item
    $m \models \E \phi$ if there exists a trace $\sigma$ starting at $m$ such
    that $\sigma \models \phi$ 
  \item
    $m \models \A \phi$ if all traces $\sigma$ starting at $m$ are such that
    $\sigma \models \phi$ 
  \item 
    $\sigma \models \bigcirc \phi$ iff $\sigma[1] \models \phi$
  \item 
    $\sigma \models \phi \until \psi$ if there exists $i \geq 0$ such that
    $\sigma[i] \models \psi$, and for all $\sigma[j]\models \phi$ $0 \leq j < i$. 
  \end{itemize}
  
\end{definition}

We consider only traces of non-Zeno runs of the  KSHA for
  satisfaction/refutation of CTL formule.  
  Given a CTL formula $\phi$ and a KSHA $(\Hh, \Pp, L)$, we say that 
  $(\Hh, \Pp, L) \models \phi$ if and only if all initial modes $m_0$ of ${\mathcal A}$
  satisfy $\phi$.
%



\section{Two-Counter Machines}

A two-counter machine (Minsky machine) $\Aa$ is a tuple $(L, C)$ where:
${L = \set{l_0, l_1, \ldots, l_n}}$ is the set of
instructions. There is a distinguished terminal instruction  $l_n$ called
HALT. 
${C = \set{c_1, c_2}}$ is the set of two \emph{counters};
the instructions $L$ are one of the following types:
\begin{enumerate}
\item (increment $c$) $l_i : c := c+1$;  goto  $l_k$,
\item (decrement $c$) $l_i : c := c-1$;  goto  $l_k$,
\item (zero-check $c$) $l_i$ : if $(c >0)$ then goto $l_k$
  else goto $l_m$,
\item (Halt) $l_n:$ HALT.
\end{enumerate}
where $c \in C$, $l_i, l_k, l_m \in L$.

A configuration of a two-counter machine is a tuple $(l, c, d)$ where
$l \in L$ is an instruction, and $c, d$ are natural numbers that specify the value
of counters $c_1$ and $c_2$, respectively.
The initial configuration is $(l_0, 0, 0)$.
A run of a two-counter machine is a (finite or infinite) sequence of
configurations $\seq{k_0, k_1, \ldots}$ where $k_0$ is the initial
configuration, and the relation between subsequent configurations is
governed by transitions between respective instructions.
The run is a finite sequence if and only if the last configuration is
the terminal instruction $l_n$.
Note that a two-counter  machine has exactly one run starting from the initial
configuration. 
The \emph{halting problem} for a two-counter machine asks whether 
its unique run ends at the terminal instruction $l_n$.
It is well known~(\cite{Min67}) that the halting problem for
two-counter machines is undecidable.

\section{Omitted Proofs}
\subsection{\underline{Proofs from Section~\ref{sec:definitions}}}
\subsubsection{Proof of Theorem~\ref{thm:ctl-mc-undec}}
  Given a two counter machine ${\cal M}$, we construct a two variable SHA ${\cal T}$ and a CTL formula 
 $\varphi$ such that ${\cal T} \models \varphi$ iff ${\cal M}$ halts. The counter values are stored 
 in a variable $x$ in the form $2-\frac{1}{2^{c_1}3^{c_2}}$, while $y$ is an extra variable used for manipulations.  
The rate of $x$ is given inside the modes of the SHA, and the rate of $y$ is always 1.
 Each instruction (increment/decrement/zero check) in the two counter machine corresponds to a widget 
in the SHA. A widget simulating an instruction $l_1: c := c+1$;  goto  $l_2$ 
starts with a mode labeled $l_1$, and ends at a mode labeled $l'_2$. 
The mode labeled $l'_2$ is connected by an edge to a widget whose start mode 
is labeled $l_2$. This edge has guard $y=1$ and resets $y$. 
This forces a time elapse of atleast 1 each time 
this widget is taken. 

Consider the following widget which simulates an increment instruction
$l_1$: increment $c_1$, goto $l_2$. 
\begin{figure}[h]
\begin{center}
\begin{tikzpicture}[->,>=stealth',shorten >=1pt,auto,node distance=1.8cm,
  semithick]
  \tikzstyle{every state}=[fill=black!30!white,minimum size=3em,rounded rectangle]
  
  \node[initial,state] (A) {$\begin{array}{c}l_1 \\ -6 \end{array}$} ;
   \node[state] at (4,0) (B) {$\begin{array}{c}A \\ 3 \end{array}$} ;
  \node[state] at (4,-3) (C) {$\begin{array}{c}B \\ -3 \end{array}$} ;
  \node[state] at (0,-3) (D) {$\begin{array}{c}l'_2 \\ 0 \end{array}$} ;
    \node[state,fill=white,draw=none] at (-2,-3) (E) {} ;
    \path (A) edge node {$x{=}0?$} (B);
  \path (B) edge  node {$x=2?$} (C);
  \path (C) edge  node[above]{$y=1?$} node[below]{$y:=0$} (D);
\path (D) edge  node [above]{$y=1?$}
node [below]{$y:=0$} (E);
  \end{tikzpicture}
\caption{Simulation of increment instruction }
\label{inc-ctl}
\end{center}
\end{figure}
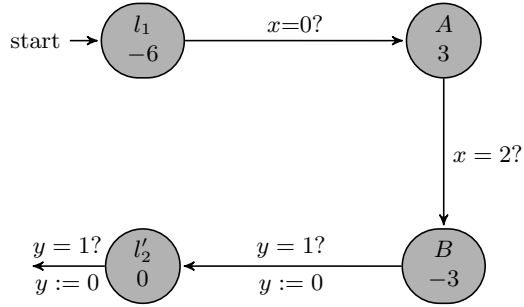
The CTL formula that takes us through the increment widget shown in figure \ref{inc-ctl} is  
$\varphi_{l_1,l_2}$ given by $(l_1 \Rightarrow \E \Diamond (l'_2 \wedge \E \bigcirc l_2))$.
To begin, $x=1$ (have an initialization widget). 

At the start of each increment/decrement/zero check widget, we have 
$y=0, x=2-\frac{1}{2^{c_1}3^{c_2}}$, where $c_1,c_2$ are the currrent counter values.
To reach $A$ from $l_1$ in figure \ref{inc-ctl},  a time $\frac{1}{3}- \frac{1}{
2^{c_1+1}3^{c_2+1}}$ is spent at $l_1$,
and hence $y=\frac{1}{3}- \frac{1}{2^{c_1+1}3^{c_2+1}}$. To reach $B$, a time of 
 $\frac{2}{3}$ is spent at $A$, obtaining $y=1- \frac{1}{2^{c_1+1}3^{c_2+1}}$. 
 Finally, $l'_2$ is reached with $y=0, x=2-\frac{1}{2^{c_1+1}3^{c_2}}$.
One time unit is spent at $l'_2$, and  we reach $l_2$ with the same $x,y$ values. 
The decrement instruction for $l_1$ can be obtained 
 by changing the rate of $x$ to -12 in mode $B$ in figure \ref{inc-ctl}; the CTL formula 
 for the decrement instruction remains same. 
 
 Now we consider the zero check instruction $l$: if $c_1=0$ goto $l_1$ else goto $l_2$.
 Figure \ref{zc} depicts the zero check widget. 
 
\begin{figure}[h]
\begin{center}
\begin{tikzpicture}[->,>=stealth',shorten >=1pt,auto,node distance=1.8cm,
  semithick]
  \tikzstyle{every state}=[fill=black!30!white,minimum size=3em,rounded rectangle]
   \node[initial,state] (A) {$\begin{array}{c}l \\ 1 \end{array}$} ;
   \node[state] at (2,2) (B) {$\begin{array}{c}A \\ 1 \end{array}$} ;
  \node[state] at (2,-2) (C) {$\begin{array}{c}B \\ 1 \end{array}$} ;
   \node[state,diamond] at (5,2) (B1) {$W1$} ;
   \node[state] at (2,0) (B2) {$\begin{array}{c}l'_1 \\ 0 \end{array}$} ;
    \node[state,diamond] at (5,-2) (C1) {$W2$} ;
   \node[state] at (5,0) (C2) {$\begin{array}{c}l'_2 \\ 0 \end{array}$} ;
    \path (A) edge node {$y{=}0?$} (B);
    \path (A) edge node {$y{=}0?$} (C);
   \path (B) edge  node {$y=0?$} (B1);
   \path (B) edge  node {$y=0?$} (B2);
  \path (C) edge  node {$y=0?$} (C1);
\path (C) edge  node {$y=0?$} (C2);
  \end{tikzpicture}
\caption{ZeroCheck}
\label{zc}
\end{center}
\end{figure}
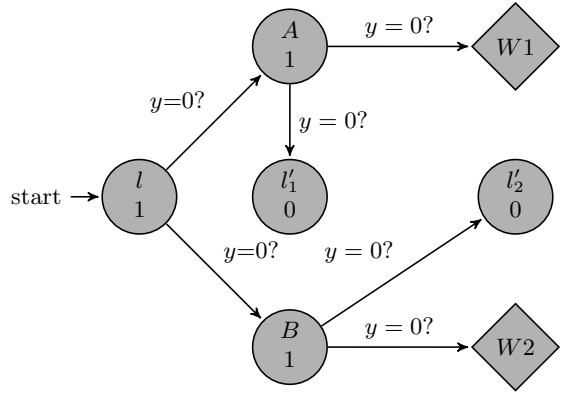
$W1$ and $W2$ are widgets that enforce that the zero check is done correctly. $W1$ checks if $c_1$ is indeed 0. 
If so, $x=2-\frac{1}{3^{c_2}}$. Likewise, $W2$ checks if $c_1 \neq 0$. If so, 
 $x=2-\frac{1}{2^{c_1}3^{c_2}}$, with $c_1 >0$. As earlier, $l'_1,l'_2$ are connected respectively to $l_1$ and $l_2$ 
 via transitions with guard $y=1$ and reset $y$.
The widgets $W1,W2$ are as follows:

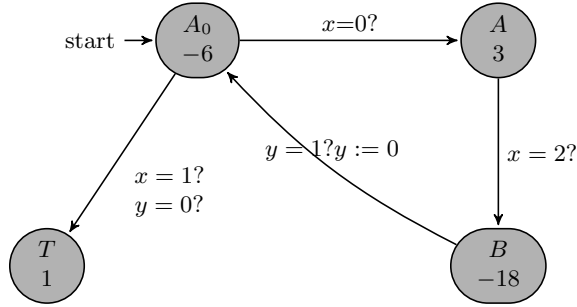
\begin{figure}[h]
\begin{center}
\begin{tikzpicture}[->,>=stealth',shorten >=1pt,auto,node distance=1.8cm,
  semithick]
  \tikzstyle{every state}=[fill=black!30!white,minimum size=3em,rounded rectangle]
  
  \node[initial,state] (A) {$\begin{array}{c}A_0 \\ -6 \end{array}$} ;
   \node[state] at (4,0) (B) {$\begin{array}{c}A \\ 3 \end{array}$} ;
  \node[state] at (4,-3) (C) {$\begin{array}{c}B \\ -18 \end{array}$} ;
  \node[state] at (-2,-3) (D) {$\begin{array}{c}T \\ 1 \end{array}$} ;
    \path (A) edge node {$x{=}0?$} (B);
  \path (B) edge  node {$x=2?$} (C);
  \path (C) edge[bend left=10]  node[above] {$y=1?y:=0$} (A);
\path (A) edge  node {$\begin{array}{c}x=1?\\y=0?\end{array}$} (D);
  \end{tikzpicture}
\caption{Widget $W1$}
\label{w1-ctl}
\end{center}
\end{figure}

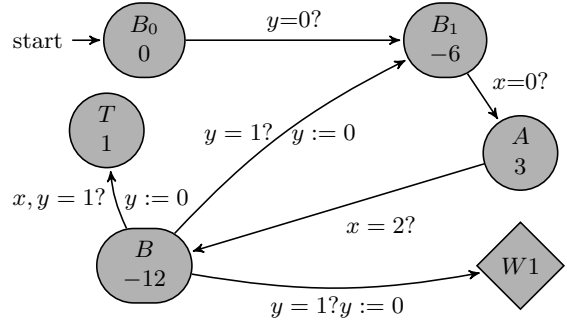
\begin{figure}[t]
\begin{center}
\begin{tikzpicture}[->,>=stealth',shorten >=1pt,auto,node distance=1.8cm,
  semithick]
  \tikzstyle{every state}=[fill=black!30!white,minimum size=3em,rounded rectangle]
  
  \node[initial,state] (A1) {$\begin{array}{c}B_0 \\ 0 \end{array}$} ;
  \node[state] at (4,0) (A) {$\begin{array}{c}B_1 \\ -6 \end{array}$} ;
   \node[state] at (5,-1.5) (B) {$\begin{array}{c}A \\ 3 \end{array}$} ;
  \node[state] at (0,-3) (C) {$\begin{array}{c}B \\ -12 \end{array}$} ;
  \node[state] at (-0.5,-1.2) (D) {$\begin{array}{c}T \\ 1 \end{array}$} ;
     \node[state,diamond] at (5,-3) (W) {$W1$} ;
    \path (A1) edge node {$y{=}0?$} (A);
    \path (A) edge node {$x{=}0?$} (B);
  \path (B) edge  node {$x=2?$} (C);
  \path (C) edge[bend left=10]  node[left] {$y=1?$}
                                node [right]{$y:=0$} (A);
   \path (C) edge[bend left=10]  node[left] {$x,y=1?$}
   node[right] {$y:=0$}(D);
  \path (C) edge[bend right=10]  node[below] {$y=1?y:=0$} (W);
  \end{tikzpicture}
\caption{Widget $W2$}
\label{w2-ctl}
\end{center}
\end{figure}

Widget $W1$ (figure \ref{w1-ctl}) repeatedly multiplies $\frac{1}{2^{c_1}3^{c_2}}$ 
by 3 till $x$ becomes 1. If this happens, then indeed $c_1=0$, and 
the mode $T$ is reached. Similarly, widget $W2$ first repeatedly mutiplies 
$\frac{1}{2^{c_1}3^{c_2}}$  by 2 (one or more times), till we obtain $\frac{1}{3^{c_2}}$, and then repeatedly multiplies 
by 3 (zero or more times) till $x=1$. 
Similarly, widget $W2$ (figure \ref{w2-ctl}) first repeatedly multiplies $\frac{1}{2^{c_1}3^{c_2}}$ 
by 2, and then (optionally) by 3 until $x$ becomes 1. 

The CTL formula to be checked at 
$l$ (in figure \ref{zc}) is $\varphi_{l,zc}$ given by 
$(l \Rightarrow [ (\E\bigcirc(A \wedge \E\bigcirc (l'_1 \wedge \E \bigcirc l_1) \wedge \E\Diamond T)) 
    \vee$\\
    $(\E\bigcirc(B \wedge \E\bigcirc (l'_2 \wedge \E \bigcirc l_2) \wedge
    \E\Diamond T))])$.
  
There is a mode labelled HALT, which corresponds to the halt instruction. Let $l_0$ 
be the label of the first instruction. 
Note that a minimum time of one unit is spent in all widgets. 
There is a self loop at the HALT mode with guard $y=1$, reset $y$. 
Any run is therefore, time divergent.
The final formula to be checked is $\Phi$ given  by 
\[
l_0 \wedge \E\{[\bigwedge (\varphi_{l_i,l_j} \wedge \varphi_{l,zc})]  \until{}
\mbox{ HALT }\}.
\] 

\subsubsection{Proof of (a) in Theorem~\ref{thm:dec-SHA-one-var}}

  The $\NLOGS$-hardness of the reachability problem for SHA follows from the
  complexity of reachability problem for finite graphs~\cite{jones1976new}.  
  On the other hand, the $\NLOGS$-hardness of the schedulability problem for SHA follows
  from the complexity of nonemptiness problem for B\"uchi automata (Proposition
  10.12 of ~\cite{PP04}).  
  For $\NLOGS$-membership of these problems we adapt the region construction for
  one-clock timed automata  proposed by Laroussinie, Markey, and
  Schnoebelen~\cite{LMS04}.  

  First notice that valuations of a one variable SHA can be written as real
  numbers corresponding to the value of the unique variable, hence we represent
  a configuration as $(m, x) \in M \times \Real$. 
  Also notice that the SHA compares valuations only against numbers appearing
  the constraints of the SHA.  
  Let $\bB = \set{b_0, b_1, \ldots, b_k}$ be the set of numbers appearing in the 
  definition of the SHA with the restriction that $b_0 = 0$ , and $b_i < b_j$
  for all $i < j$.    
  Observe that two configurations $(m, x)$ and $(m', x')$
  are indistinguishable by the constraints of an SHA if $m = m'$ and the
  values of the integer parts of $x$ and $x'$ are either both greater than $b_k$
  or both less than $b_k$, or the integer parts of $x$ and $x'$ are equal  and
  fractional parts are either both zero or both non-zero.  
  Given the largest constant appearing in the constraints of an SHA is $b_k$, then
  the regions (equivalence classes according to indistinguishablity by
  constraints) are the following the set of $4(b_k+1)$ intervals: 
  \begin{eqnarray*}
    (-\infty, -b_k), [-b_k, -b_k], (-b_k, -b_k+1), [-b_k+1,-b_k+1], \\
    \ldots [0,0], (0, 1), [1, 1], \ldots,  (b_k-1, b_k), [b_k, b_k], (b_k, \infty).
  \end{eqnarray*}
  Also note that such regions are also indistinguishable with respect to
  time-progress, as in any mode, the corresponding rate is  positive, negative,
  or zero, and in any of these cases, all the valuations in a given region  will
  pass through same set of regions as time progresses.  
  One can easily show that this region relation form the classical time-abstract
  bisimulation~\cite{BK08}, and hence the reachability and schedulability
  analysis of one variable SHA one can reduce~\cite{AD94,LMS04} these problems
  to corresponding problems on finite graphs whose vertices are the pair of
  modes and regions. 
  However, this construction does not yield an $\NLOGS$ algorithm, since these
  regions can not be stored in a logarithmic space algorithm.
  Using the similar observation from~\cite{LMS04}, it follows that the following set
  of regions also form time-abstract bisimulation over the configuration of SHA: 
  \begin{eqnarray*}
    (-\infty, -b_k), [-b_k, -b_k], (-b_{k}, -b_{k-1}), [-b_{k-1},-b_{k-1}], \\
    \ldots [b_0, b_0], (b_0, b_1), [b_1, b_1], \ldots,  (b_{k-1}, b_k), [b_k, b_k], (b_k, \infty).
  \end{eqnarray*}
  Moreover the number of regions in this encoding are linear in the size of
  the SHA, and hence a region can be stored in logarithmic space.
  We call this region graph LMS region graph.
  Using standard breadth-first search algorithms, the reachability and the
  schedulability problems for SHA can be solved in linear time. 
  Using this region graph, the standard $\NLOGS$ algorithms for reachability and
  repeated reachability over graphs can be adapted to give $\NLOGS$ algorithms
  for reachability and schedulability problems for one variable SHAs.  \hfill

\subsection{Proof of Lemma~\ref{thm:sha-one-var-nz-main}}
An infinite \emph{run} in the region graph $\Rr\Gg_{\Hh}$ of a one-variable SHA $\Hh$
is an infinite sequence of the form $((m_{0},r_{0}), a_{0}, (m_{1},r_{1}),
a_{1}, \ldots)$
where $r_{j}$ are valid regions and for all $j>0$ either $((m_{j},r_{j}),
a_{j},(m_{{j+1}},r_{{j+1}})) \in \Delta_{\Rr\Gg_{\Hh}}$ or $m_{j} = m_{{j+1}}$
and $a_{j} = \varepsilon$ and $r_{{j+1}}$ is a strict time-successor of
$r_{j}$ with respect to the mode $m_{j}$.
An infinite run in the region graph $\Rr\Gg_{\Hh}$ of a one-variable
SHA $\Hh$ is called \emph{progressive} iff it has a non-Zeno instantiation.
A region $r$ in $\Rr\Gg$ is called \emph{thick} if $\exists \delta>0, \nu \in r$ such that
$\nu + \delta \in r$. A region which is not \emph{thick} is called a \emph{thin} region.

 We first prove the ``only if'' side by proving that if none of the above conditions hold for a path, then the
only instantiations that the path has are Zeno.
Consider an infinite region path $((m_{0},r_{0}), a_{0}, (m_{1},r_{1}), a_{1}, \ldots)$
such that none of the above conditions hold. Then, one of the following three cases arise:
\begin{itemize}
\item There is an index $n$ such that for all $j\geq n$, we have $r_{j} = r$ where $r \models x>C_x$ and that $F(m_{j}) < 0$.
Consider the suffix starting from the $n^{th}$ index of any instantiation $\sigma$ , say 
$\sigma_n = ((m_{n},\nu_{n}), a_{n}, (m_{{n+1}},\nu_{{n+1}}), a_{{n+1}},
\ldots)$. 
If $\sigma$ was non-Zeno, then there must be a valuation $\nu$ in $\sigma_n$
such that $\nu < \nu_{n} - t*\rho_{min}$ for any $t\geq 0$, where $\rho_{min}
\geq 1$ is the minimum absolute value of any rate amongst the rates of $m_{n},
m_{{n+1}}, \ldots$. 
If we choose $t = \nu_{n} - C_x$, then $\nu < C_x$, violating our assumption.
\item There is an index $n$ such that for all $j\geq n$, we have $r_{j} = r$ where $r \models x<c_x$ and that $F(m_{j}) > 0$.
This case is similar to the previous one and hence omitted.
\item There is an index $n$ such that for all $j\geq n$
we have $r_{j} = r$ such that $r \models x \geq c_x$ and $r \models x \leq C_x$ and either $F(m_{j}) > 0$,
or $F(m_{j}) < 0$ for all $j>n$. W.l.o.g, we can assume that $F(m_{j}) > 0$ (The other case is analogous). Depending upon
the region $r$, we have the following two cases:
\begin{itemize}
	\item $r$ is of the form $x = c$, where c is an integer in $[c_x, C_x]$. Then all the instantitations
	of the path will clearly be Zeno, with the variable staying at the same value $c$ even though the rates in each of
	the locations visited are strictly positive.
	\item $r$ is of the form $c_1 < x < c_2$, where $c_1$ and $c_2$ are integers in $(c_x, C_x)$. Now, consider
	the suffix (starting from index $n$) of any instatiation $\sigma$ of the path, say $\sigma_n =$ 
	$((m_{n},\nu_{n}), a_{n}, (m_{{n+1}},\nu_{{n+1}}), a_{{n+1}}\ldots)$.
	Now, if $\sigma$ was non-Zeno, there must be a valuation $\nu$ in $\sigma_n$ such that $\nu > \nu_{n} + t*\rho_{min}$
	for any $t$, where $\rho_{min} \geq 1$ is the rate with the minimum absolute vale  amongst the rates of $q_{n}, q_{{n+1}}, \ldots$.
	But, then we can choose $t$ to be $c_2 - c_1 + 1$, in which case $\nu \not\in (c_1, c_2)$, giving a contradiction.
\end{itemize}
\end{itemize}
Hence, the conditions given are necessary for a path to be progressive.

We now prove that the above characterization is also sufficient. We do a case-by-case analysis for all the
5 conditions given:

\begin{itemize}
	\item {\bf Case-1} : Condition~\ref{nz-1} holds. In this case, consider any instantiation
	$\sigma = ((m_{0},\nu_{0}), a_{0}, (m_{1},\nu_{1}), a_{1}, \ldots)$. Then there are infinitely many indices
	${k_0}, {k_1}, \ldots$ such that the rates in the modes $m_{{k_0}}, m_{{k_1}}, \ldots$
	are zero. Then, it is indeed possible to take the discrete steps $a_{{k_0}}, a_{{k_1}} \ldots$
	after a unit interval of time, while still having the same abstract path. Such a run is non-Zeno.
	\item {\bf Case-2}: Condition~\ref{nz-2} holds. In any instantiation of the region path satisfying
	this condition, all valuations beyond a index $n$ satisfy $x>C_x$ and there are infinitely many indices
	$n < {k_0} < {k_1} < \cdots$ such that the rates in the modes $m_{{k_0}}, m_{{k_1}}, \ldots$ are strictly
	positive. Then, consider an instantiation in which the delay between the discrete steps $a_{{k_l} - 1}$ and $a_{{k_l}}$
	is equal to one. Clearly, this is a valid instantiation of the path, and is also non-Zeno.
	\item {\bf Case-3}: Condition~\ref{nz-3} holds. This case is similar to the previous case and hence, omitted.
	\item {\bf Case-4}: Condition~\ref{nz-4} holds. Note that, in this case, there exists 2 regions $r_1$ and $r_2$
	such that $r_1$ is a is a \emph{thin} region,  $r_2$ \emph{thick} region and
	there are infinitely many indices ${k_0}, {k_1}, \ldots$ such that $r_{{k_q}} = r_1$ and $r_{{k_q + 1}} = r_2$
	and $m_{{k_q}} = m_{{k_q + 1}}$. Hence, it is indeed possible to spend a 
	time of $\frac{1}{2*\rho}$ in the region $r_2$, where $\rho$ is the maximum absolute value of the rate
	in any mode, before taking the next discrete jump, while still respecting the abstract path. This gives a non-Zeno path.
	\item {\bf Case-5}: Condition~\ref{nz-5} holds. In any instantiation of the region path satisfying
	this condition, all valuations beyond a index $n$ satisfy $c_1 < x < c_2$ and there are infinitely many indices
	$n < {k_0} < {l_0} < {k_1} < {l_1} < \cdots$ such that the rates in the modes
	$m_{{k_0}}, m_{{k_1}}, \ldots$ are strictly positive, while those in the modes $m_{{l_0}}, m_{{l_1}}, \ldots$
	are strictly negative. W.l.o.g, we can choose $r_{n}$ to be one of $x = c_1$ or $x = c_2$. Also,
	if $r_{n} = (x=c_1)$, then, $F(m_{{n+1}}) > 0$ and we can assume ${k_0} = {n+1}$. For some fixed constant $\delta_1$,  one
	can then spend a time of $\frac{\delta_1}{F(m_{{k_0}})}$ in mode $m_{{k_0}}$ and $0$ units in all modes between ${k_0}$
	and ${l_0}$ (both exclusive). Also, we can then spend a time of $\frac{\delta_1 - \delta_2}{\vert F(m_{{l_0}}) \vert}$ in mode $m_{{l_0}}$,
	where $0 < \delta_2 < \delta_1 < 1$ are some fixed constants. Similarly, one can spend:
	\begin{itemize}
		\item $\frac{\delta_1 - \delta_2}{F(m_{{k_q}})}$ units of time in mode $m_{{k_q}}$ for $q > 0$
		\item $\frac{\delta_1 - \delta_2}{\vert F(m_{{l_q}})\vert }$ units of time in mode $m_{{l_q}}$ for $q > 0$
		\item $0$ units of time, otherwise
	\end{itemize}
	Note that, between indices ${k_q}$ and ${l_q}$, a minimum of $\frac{\delta_1 - \delta_2}{\rho_{max}}$ is spent,
	where $\rho_{max} \geq 1$ is the maximum absolute value of any rate amongst the rates of modes $m_{j}$, $j \geq k_1$.
	Hence, such a run is non-Zeno.
\end{itemize}

The above conditions can be formulated as an LTL formula and conjuncted with the
original LTL formula so that Zeno behaviours are ruled out.
We define predicates $F_0, F_{+}, F_{-}$ to denote rates 0, positive and negative, 
which hold good mode $m$ iff $F(m)=0, F(m)>0$ and $F(m)<0$ respectively. 
We also define $open_C$ and $open_c$ which hold good for regions of the form $(C_x, \infty)$ and 
$(-\infty,c_x)$ respectively. Further, we have predicates $thick$ and $thin$ representing \emph{thick} and \emph{thin} regions.
\textsc{Regions} is the set of all regions.

Here are the LTL formulae for conditions 1-5:
\begin{enumerate}
\item $\Box \Diamond F_0$
\item $\Diamond\{\Box (open_C \wedge \Diamond F_+)\}$
\item Similar to 2.
\item $\bigvee\limits_{r \neq r' \in \textsc{Regions}}[\Box \Diamond r \wedge \Box \Diamond r']$
\item $\bigvee\limits_{r \in \textsc{Regions}}\Diamond\{\Box[thick \wedge r \wedge \Diamond F_+ \wedge \Diamond F_-]\}$
\end{enumerate}

Note that the conditions specified in Lemma~\ref{thm:sha-one-var-nz-main} are a superset
of the conditions for a progressive run in a timed automata ~\cite{AD94}.
This is because, in an SHA, the variables can change with negative rates.
Consider for example, the following system with $x$ as a continuous variable:

\begin{figure}
  \begin{center}
    \begin{tikzpicture}
      \tikzstyle{every state}=[fill=gray!20!white,minimum size=2em,shape=rounded rectangle]

      \node[initial,state,fill=gray!20] (m1) {$m_0$};
      \node[state,fill=gray!20] at (4, 0) (m2) {$m_1$};
      
     \path[->] (m1) edge[bend left] node [above] {$x<1$}  (m2);
     \path[->] (m2) edge[bend left] node [below] {$x<1$}  (m1);

    \end{tikzpicture}
  \end{center}
\vspace{-1em}
\caption{Progressive paths}
\vspace{-2em}
\label{fig:progressive}
\end{figure}
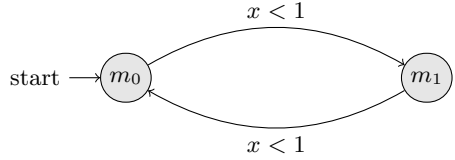

If the automaton in Figure~\ref{fig:progressive} is a timed automaton, then there is no
non-Zeno run starting with $x=0$. If however, the system was an SHA with rates
$F(m_0) = +1$ and $F(m_1) = -1$, then the following is a valid non-Zeno run:
$((m_0, 0),(m_1, 0.5),(m_0, 0.25),(m_1, 0.5),(m_0, 0.25),(m_1, 0.5),\ldots)$,
illustrating why the conditions in ~\cite{AD94} are not complete for progressive runs
in systems with negative rates.

\subsubsection{Proof of Theorem~\ref{thm:wsha-reach-np-hard}: Reachability}

  This is a continuation of the proof given in the main paper.
  To show NP-hardness of the reachability we reduce the \emph{subset-sum problem} to solving the
  reachability problem in a WSHA. 
  Given the set $A$ and the integer $k$, we construct a WSHA $\Hh$ with $n+3$
  variables $x_0,x_1, \dots, x_{n+2}$, $2n+1$ modes $m_0, \dots, m_{2n}$ and
  $2n$ transitions, such that a particular target polytope $\Tt$ is reachable in
  the WSHA iff there is a non-empty subset $T \subseteq A$ that sums up to $k$.
  The rates in the modes $m_i (0\leq i\leq 2n)$ are given as follows ($\vr_i$ represents
  
  Given the set $A$ and the integer $k$, we construct a WSHA $\Hh$ with $n+3$
  variables $x_0,x_1, \dots, x_{n+2}$, $2n+1$ modes $m_0, \dots, m_{2n}$ and
  $2n$ transitions, such that a particular target polytope $\Tt$ is reachable in
  the WSHA iff there is a non-empty subset $T \subseteq A$ that sums up to $k$.

The rates in the modes $m_i (0\leq i\leq 2n)$ are given as follows ($r_i$ represents
$(F(m_i))$:
\begin{itemize}
\item[$\bullet$] $\vr_0(x_0) =1, \vr_0(x_{n+1}) = \vr_0(x_{n+2}) = 0$, and
$\vr_0(x_i) = a_i$, where $1\leq i\leq n$
\item[$\bullet$] $\vr_{2j-1}(x_j) = -a_j$, $\vr_{2j-1}(x_{n+1}) = a_j$,
$\vr_{2j-1}(x_{n+2}) = 1$, and $\vr_{2j-1}(x_0)$ = $\vr_{2j-1}(x_i) = 0$,
where $1\leq i\neq j\leq n$
\item[$\bullet$] $\vr_{2j}(x_j) = -a_j$, and $\vr_{2j}(x_0) = \vr_{2j}(x_i) =$
$\vr_{2j}(x_{n+1}) = \vr_{2j}(x_{n+2}) = 0$,
 where $1\leq i\neq j\leq n$
\end{itemize}
The transitions in the WSHA are as follows:
\begin{itemize}
\item[$\bullet$] There are edges from $m_0$ to $m_1$ and to $m_2$.
\item[$\bullet$] There are edges from $m_{2j-1}$ and from $m_2j$ to $m_{2j+1}$ and to $m_{2j+2}$, $1\leq j\leq n-1$
\end{itemize}
We now claim that the polytope $\Tt$, given by the set of points $\px$
such that $\px(0) = 1$, $\px(i) = 0$, $1\leq i\leq n$, $\px(n+1) = k$
and $\px(n+2) \in [1,n]$, is reachable from the point $\vzero$
iff there is a subset $T$ of $A$ that sums up to $k$.

Before proving the correctness of the construction, we present a brief intuition of what
each variable stands for. The variable $x_0$ ensures that the variables
$x_1, x_2\ldots x_n$ are initialized with values $a_1, a_2, \ldots a_n$ (the elements
of $A$). The variable $x_{n+1}$ sums up the values of the elements in the chosen
subset $T$, and can be later compared with $k$.
The variable $x_{n+2}$ ensures that the set $T$ is non-empty (specifically when 
$k = 0$).

We first prove that if there is a non-empty $T$ that sums to $k$, then $\Tt$
is reachable from $\vzero$.
It is evident that the point $(1,0,0\cdots,k,|T|) \in \Tt$ can be reached
by taking the following sequence of (time, mode) having length $n+1$:
\[
\seq{(1,m_0), (1, l_1), (1, l_2)\cdots, (1, l_n)}
\]
where $l_i = m_{2i-1}$ if $a_i \in T$, otherwise $l_i = m_{2i}$.

We now prove that if the WSHA $\Hh$ reaches a point in $\Tt$ starting from $\vzero$,
then there is a non-empty subset $T$ of $A$ that sums to $k$.
For this, note that since $m_{2j-1}$ or $m_{2j}$ ($1\leq j\leq n$) cannot be
reached from one another in the underlying graph, atmost one of them can
occur in any path in the WSHA.
Also, since $\pt(0) = 1$, $\forall \pt \in \Tt$, and since for all modes $m$ apart from
$m_0$, $F(m)(x_0) = 0$, the mode $m_0$ must be taken for exactly unit time.
This means that the first step in any run that reaches $\Tt$ is $(t_0 = 1, m_0)$.
Now, after the first step, the WSHA reaches the valuation $(1, a_1, a_2\cdots a_n,0,0)$.
In order to reach $\Tt$, all the coordinates $x_1$ to $x_n$ must go to zero.
However, since the rate of $a_i$ is non-zero only in the modes $m_{2i-1}$ and $m_{2i}$, and
because atmost one of $m_{2i-1}$ and $m_{2i}$ can be taken, we conclude that exactly one
of $m_{2i-1}$ and $m_{2i}$ is taken for a 1 unit of time.
(For the sake of simplicity, we can assume wlog that each of $a_i\neq 0$). 
 Hence, any reachable run will comprise of $n+1$ steps.
Also, since we demand that $\pt(n+2) \in [1,n]$ and the rate of $x_{n+2}$ is non-zero
only in the modes $m_{2i-1}$ ($1\leq i\leq n$), atleast one of the modes $m_{2i-1}$
is necessarily taken. 
Now consider the run that reaches $\Tt$:
\[
\seq{(1, m_0), (1, l_1), (1, l_2)\cdots, (1, l_n)}
\]
It is clear that the set $\{a_i | l_i = m_{2i-1}\}$ is a non-empty subset that sums to $k$.
Figure \ref{nphard-main} gives an illustration of the WSHA construction for a set $\{1,2,-3\}$. 

\subsubsection{Proof of Theorem~\ref{thm:wsha-reach-np-hard}: Schedulability}
For schedulability, observe that given a run type $\sigma = \seq{n_0, b_1, n_1, \ldots, b_p, n_p}$ and an initial
valuation $\nu_0$, we say that $\sigma$ schedules the WSHA $\Hh$ from $\nu$ to $\Tt$ if the
following linear program is feasible: 
for every $ 0 \leq i \leq p$ and  $m \in M_{n_i}$ there are $\nu_{n_i},
\nu_{n_i}' \in \Real^{|X|}$ and  $t_i^m \in \Rplus$ such that     
\begin{eqnarray}
  \nu_0 &=& \nu_{n_0} \nonumber \\
  \nu_{n_i}, \nu_{n_i}'  & \in & S_{M_{n_i}} \text{ for all $0 \leq i \leq p$}
  \nonumber \\
  \nu_{n_i}   & \in & G(b_i) \text{ for all $0 < i \leq p$}
  \nonumber \\
  \nu_{n_{i+1}}(j) &=& 0  \text{ for all $x_j \in R(b_{i+1})$ and  $0 < i \leq p$}
  \nonumber \\
  \nu_{n_{i+1}}(j) &=& \nu_{n_i}'(j)  \text{ for all $x_j \not \in R(b_{i+1})$ and  $0 < i \leq p$}
  \nonumber \\
  \nu_{n_i}'  & = & \nu_{n_i} + \sum_{m \in M_{n_i}} F(m) \cdot t_i^m \text{
    for all $0 \leq i \leq p$} \nonumber\\
  t_i^m  &\geq&  0 \text{ for all $0 \leq i \leq p$ and $m \in M_{n_i}$} \nonumber\\ 
  \vzero &=& \sum_{m \in M_{n_p}} F(m) \cdot t_p^m  \nonumber\\
  1 &=& \sum_{m \in M_{n_p}} t_p^m \nonumber
\end{eqnarray}
These constraints check whether it is possible to reach some mode with the
highest rank while satisfying the guard and constraints of the WSHA, and
schedule the system forever in the CMS defined by the SCCs of the highest rank in
the type. The proof for this claim is similar to the proof for
the CMS, and hence omitted.
Since the size of a run type is polynomial in the size of WSHA, and
checking whether a run type schedules from $\nu$ can be performed in
polynomial time (LP feasibility), the schedulability problem in WSHA can be decided in
polynomial time.  
The NP-hardness of the result follows from a similar reduction from subset sum
problem as shown in Theorem~\ref{thm:wsha-reach-np-hard}. 

\subsubsection{Proof of Corollary~\ref{thm:wsha-ltl-mc}}

  Let $(\Hh, \Pp, L)$ be a weak Kripke SHA and let $\phi$ be an LTL property. 
  From the standard LTL to B\"uchi automata construction~\cite{VW86} the negation
  of $\phi$ can be converted to B\"uchi automata $\Aa_{\neg \phi}$  whose size
  is exponential in the size of $\phi$.
  The LTL model checking problem then reduces to solving a schedulability
  problem for the product of original automaton and $\Aa_{\neg \phi}$. 
  We observe that the product of a weak SHA $\Hh$ and $\Aa_{\neg \phi}$ remains
  weak SHA, since variables occur only in the WSHA.  
  The LTL model checking problem thus reduces to the schedulability problem for
  a weak SHA of size exponential in the size of original problem, and hence can
  be decided using standard polynomial space algorithm.
  PSPACE-hardness of the problem follows as the LTL model checking over finite
  automata is already PSPACE-complete.

\subsection{\underline{Proofs from Section~\ref{sec:undec}}}

\subsubsection{Proof of Theorem~\ref{thm:sha-undec-three-var}}

  We show a reduction from the halting problem for two-counter Minsky machines
  ${\cal M}$. 
  The variables $x,y,z$ of the SHA  have global invariants $0 \leq x \leq 1$, $0
  \leq y,z\leq 5$ respectively. 
  The counters $c_1, c_2$ of two-counter machine are encoded in variables $y$ and
  $z$ as 
  \[
  y=5-\frac{1}{2^{c_1}} \text{ and  } z=5-\frac{1}{2^{c_2}}.
 \]
  To begin, we have $c_1=c_2=0$, hence $y=z=4$, and $x=0$.
  The rates of $x,y,z$ are indicated by 3-tuples inside the modes of the SHA. 
  Likewise, the discrete updates on $x,y,z$ are also indicated by a 3-tuple 
  on the transitions.  We construct widgets for each of the increment/decrement and zero check instructions. 
  Each widget begins with $x=0,y=5-\frac{1}{2^{c_1}}$ and $z=5-\frac{1}{2^{c_2}}$, where $c_1,c_2$ are the current 
  values of the counters. 
  \begin{itemize}
  \item {\bf (Increment and Decrement Instructions).}
    Let us first consider increment instruction 
    \[
    l : c_1 := c_1 + 1 \text{ goto } l'.
    \] 
    The Figure~\ref{inc-c1} depicts the increment widget. 
    This widget starts with a mode labeled $l$, and  ends in a mode labeled $l'$. 
    On entering $l$, we have $x=0$ and $y=5-\frac{1}{2^{c_1}}$. 
    Observe that a time equal to $k=\frac{1}{6}[5-\frac{1}{2^{c_1}}]$ units has 
    to be spent at $l$, giving $x=\frac{1}{6}[5-\frac{1}{2^{c_1}}]$ and $y=0$. 
    Otherwise, if one spends a time larger than $k$, say $k+\eta$, at $l$, then we obtain
    $y=5-\frac{1}{2^{c_1}}-6[k+\eta] <0$, violating the invariants of
    $y$. 
    Similarly, if a time $k-\eta$ is spent at $l$,  
    then we obatin $y=5-\frac{1}{2^{c_1}}-6[k-\eta]=6\eta$, which would make $y=5+6 \eta > 5$ on taking the transition from 
    $l$ to $A$. Thus, on entering $A$, we have $y=5$ and $x=k$. At $A$, a time $k_1=\frac{1}{6}$ is spent, obtaining 
    $x=k+k_1=1-\frac{1}{6.2^{c_1}}$, and $y=0$. If a time $<k_1$ is spent at $A$, we get $y=5-30(k_1-\kappa)=30 \kappa$
    which would violate the invariants  of $y$ on taking the transition from $A$ to $B$. Similarly, 
    if a time $>k_1$ was spent at $A$, we get $y=5-30(k_1+\kappa)<0$ again violating the invariants of $y$. 
    Thus, on reaching $B$, we have $x=1-\frac{1}{6.2^{c_1}}$ and $y=5$. To reach $l'$, we spend 
    $\frac{1}{6.2^{c_1}}$ time at $B$, and get $x=0$, and $y=5-3(\frac{1}{6.2^{c_1}})=5-\frac{1}{2^{c_1+1}}$. 
    Note that if a time less than $\frac{1}{6.2^{c_1}}$ is spent at $B$, then $x$ would violate 
    its invariants while taking the transition to $l'$.  Likewise, 
    if a time $>\frac{1}{6.2^{c_1}}$ is spent at $B$, $x$ would violate its
    invariants.
    
    This widget can be modified to simulate the instructions increment counter $c_2$,
    decrement counters $c_1$ and decrement counter $c_2$,respectively,  by changing  the cost rate of $z$ at
    $B$ to -3, the cost rate of $y$ at $B$ to -12, and the cost rate of $z$ at $B$
    to $-12$, respectively.  
    
  \item {\bf (Zero Check Instruction).}
    We next consider the zero check instruction:
    \[
    l: \text{ if } c_1=0 \text{ goto } l' \text{ else  goto } l''.
    \]
 The right side widget of Figure~\ref{z-c1} depicts the zero check widget. 
 The zero check widget  starts in a mode $l$ and reaches either the mode $l'$ or mode $l''$.

 \begin{figure}
  \begin{center}
  \scalebox{0.7}{
  \begin{tikzpicture}[->,>=stealth',shorten >=1pt,auto,node distance=1.8cm,
    semithick]
    \tikzstyle{every state}=[fill=black!10!white,minimum size=3em,rounded rectangle]
    \node[initial,state, initial text={}] at (-4,0) (A1) {$\begin{array}{c}l \\ (1,-6,0)\end{array}$} ;
    \node[state] at (0,0) (A) {$\begin{array}{c}A \\ (1,-30,0)\end{array}$} ;
    \node[state] at (4,0) (B) {$\begin{array}{c}B \\ (1,-3,0)\end{array}$} ;
    \node[state] at (8,0) (C) {$\begin{array}{c}l' \\ (x,y,z)\end{array}$} ;
    \path (A1) edge node [above]{(0,5,0)}(A);
    \path (A) edge node [above]{(0,5,0)}(B);
    \path (B) edge node [above]{(-1,0,0)}(C); 
  \end{tikzpicture}
  }
  \caption{Increment $c_1$ widget}
   \label{inc-c1}
  \end{center}
\end{figure}

\begin{figure}
  \begin{center}
  \scalebox{0.7}{
  \begin{tikzpicture}[->,>=stealth',shorten >=1pt,auto,node distance=1.8cm,
    semithick]
    \tikzstyle{every state}=[fill=black!10!white,minimum size=3em,rounded rectangle]
    \node[initial,state,initial where = above, initial text={}] at (3,0) (A1) {$\begin{array}{c}l \\ (2,1,0)\end{array}$} ;
    \node[state] at (6,0) (A) {$\begin{array}{c}A \\ (-1,-1,0)\end{array}$} ;
    \node[state] at (9,0) (A2) {$\begin{array}{c}l' \\ (x,y,z)\end{array}$} ;
    \node[state] at (0,0) (E) {$\begin{array}{c}B \\(1,\frac{1}{2},0) \end{array}$} ;
    \node[state] at (-3,0) (B) {$\begin{array}{c}C \\(2,1,0) \end{array}$} ;
    \node[state] at (-6,0) (B2) {$\begin{array}{c}D \\ (0,0,0)\end{array}$} ;
    \node[state] at (-9,0) (C2) {$\begin{array}{c}l'' \\ (x,y,z)\end{array}$} ;
    \path (A1) edge node [above]{(0,1,0)}(A);
    \path (A) edge node [above]{(0,-1,0)}(A2);
    \path (B) edge node [above]{(-1,0,0)}(B2);
    \path (B2) edge node [above]{(0,4,0)}(C2);
    \path (A1) edge node [above]{(0,-5,0)}(E);
    \path (E) edge node [above]{(-1,0,0)}(B);
  \end{tikzpicture}
}
\caption{Zero Check widget}
\label{z-c1}
\end{center}
\end{figure}
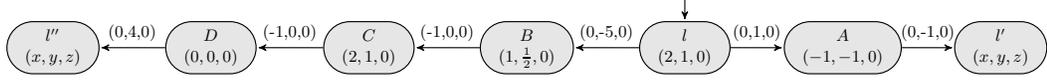
   
 Assume that $c_1=0$. Then $y=4$. Then the only possible transition to take is the one from $l$ to $A$, which 
makes $y=5$. Note that no time is spent at $l$ in this case, since it would violate 
the invariants of $y$ while moving to $A$. Thus, at $A$ we have $y=5,x=0$. No time is spent at $A$ since 
it would make $x <0$. On reaching $l'$, we therefore have $y=4$. Note that if $c_1=0$, then we cannot take the transition 
from $l$ to $B$ since it cannot meet invariants $y \leq 5$
and $x \leq 1$ simultaneously : Unless one time unit is spent at $l$, $y$ would violate its invariants 
on taking the edge to $B$. In this case, $x$ will become 2, violating $0 \leq x \leq 1$.

If $y=5-\frac{1}{2^{c_1}}$ with $c_1 >0$, then we take the transition from $l$ to $B$ after spending some time 
at $l$. We cannot take the transition from $l$ to $A$, since that would violate 
the invariants of $y$. Thus, we go to $B$ when $c_1 >0$. The time spent at $l$ is some $0<\eta < 1$. 
This gives $y=5-\frac{1}{2^{c_1}}+\eta$ and $x=2\eta$. 
If $\eta > \frac{1}{2^{c_1}}$, then the invariants of $y$ get violated. If $\eta < \frac{1}{2^{c_1}}$, then 
on taking the transition to $B$, we obtain $y=5-\frac{1}{2^{c_1}}+\eta-5<0$, again violating 
the invariants of $y$. 
Hence, $\eta=\frac{1}{2^{c_1}}$, and we reach $B$ with $x=2\eta,y=0$.
Now at $B$, a time $1-2\eta$ is spent, reaching $C$ with $x=0, y=\frac{1}{2}-\eta$. 
If a time $< 1-2\eta$ is spent at $B$, then $x$ will violate its invariants 
on taking the transition to $C$; if a time $>1-2\eta$ is spent at $B$, $x$ will exceed 1.
At $C$, we thus have $x=0$ and $y=\frac{1}{2}-\frac{1}{2^{c_1}}$. 
At $C$, $\frac{1}{2}$ units of time is spent, obtaining $x=1$ and $y=1-\frac{1}{2^{c_1}}$, 
so that on reaching $D$, $x=0$ and $y=1-\frac{1}{2^{c_1}}$. Finally, $l''$ is reached with 
$x=0$ and $y=5-\frac{1}{2^{c_1}}$. 
\end{itemize}
It is straightforward to see that the modules for increment, decrement, and zero
check simulate the two counter machine. There is a mode HALT corresponding to the HALT  instruction. 
The halting problem for two counter machines is thus  reduced to the
reachability of the mode HALT.

\subsubsection{Proof of Theorem~\ref{thm:sha-undec-three-var-one-clock}}
 We simulate a two counter machine using a CMS with 3 variables and one clock. 
 The three variables $x_1,x_2,y$ have global invariants $0 \leq x_1, x_2 \leq 5$ and $0 \leq y \leq 1$ 
 respectively. The clock variable is $x$. 
 The counters $c_1,c_2$ are encoded as $x_1=5-\frac{1}{2^{c_1}}$, $x_2=5-\frac{1}{2^{c_2}}$. 
 At the beginning of each widget, we have $x_1=5-\frac{1}{2^{c_1}}$, $x_2=5-\frac{1}{2^{c_2}}$ and $y=1$, where 
 $c_1, c_2$ are the current counter values. 
 To begin with, $x_1=x_2=4$, $y=1$ and $x=0$. 
  The rates of $x_1,x_2,y$  are written inside the modes of the CMS as 3 tuples. 
 
 We first see the simulation of an increment instruction $l$: increment $c_1$, goto $l'$. 
 Figure \ref{inc-2} depicts the increment widget. This widget starts with a mode $l$ 
 and goes to a mode $l'$.

  \begin{figure}[h]
\begin{center}
 \scalebox{0.7}{
\begin{tikzpicture}[->,>=stealth',shorten >=1pt,auto,node distance=1.8cm,
  semithick]
  \tikzstyle{every state}=[fill=black!10!white,minimum size=1em,rounded rectangle]
 \node[initial,state, initial text={}] at (-2.5,0) (A1) {$\begin{array}{c}l \\ (6,0,-1)\end{array}$} ;
  \node[state] at (2.5,0) (A) {$\begin{array}{c}A \\ (-5,0,0)\end{array}$} ;
   \node[state] at (2.5,-2) (B) {$\begin{array}{c}B \\ (5,0,0)\end{array}$} ;
   \node[state] at (-2.5,-2) (C) {$\begin{array}{c}C \\ (-3,0,1)\end{array}$} ;
   \node[state] at (-2.5,-4) (D) {$\begin{array}{c}D \\ (0,0,-1)\end{array}$} ;
    \node[state] at (2.5,-4) (E) {$\begin{array}{c}E \\ (0,0,1)\end{array}$} ;
      \node[state] at (2.5,-6) (F) {$\begin{array}{c}l' \\ (x_1,x_2,y)\end{array}$} ;
    \path (A1) edge node [above]{$0<x<1?$}
		    node [below]{$x:=0$}(A);
    \path (A) edge node [left]{$x=1?$}
                    node [right]{$x:=0$}(B);
\path (B) edge node [above]{$x=1?$}
                    node [below]{$x:=0$}(C);
\path (C) edge  node [left]{$x:=0$}(D);
\path (D) edge node [above]{$x=1?$}
                    node [below]{$x:=0$}(E);
                    
   \path (E) edge node [left]{$x=1?$}
                    node [right]{$x:=0$}(F);
                 
  \end{tikzpicture}
  }
\caption{Increment $c_1$}
\label{inc-2}
\end{center}
\end{figure}
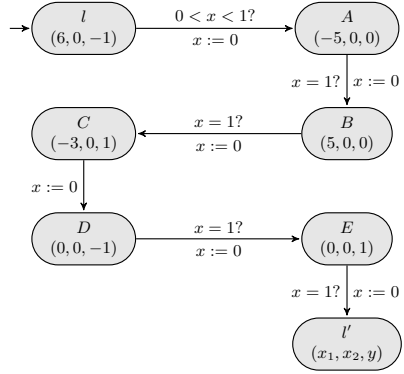

We start at mode $l$ with $x_1=5-\frac{1}{2^{c_1}}$ and $y=1$. A time 
$k=\frac{1}{6.2^{c_1}}$ is spent at $l$ obtaining $x_1=5$ and $y=1-k$. 
Note that spending time $>k$ at $l$ would make $x_1 >5$, while 
spending time $<k$ would give $y=1-k+6\eta$ and $x_1=5-6\eta$ for some $\eta < k$. 
In this case, one time unit spent at $A$ gives $x_1<0$, violating 
the invariant. Hence, $k$ time units are spent at $l$, and at $B$, 
we obtain $x_1=0$ and $y=1-k$. We reach $C$ with $x_1=5$ and $y=1-k$.
At $C$, a time of $k$ is spent, giving $x_1=5-\frac{1}{2^{c_1+1}}$, 
and $y=1$. Note that spending time $>k$ at $C$ will give $y >1$. 
If a time $<k$ is spent at $C$, then 
$y <1$ on reaching $D$, and this results in $y <0$ at $E
$, violating the invariant for $y$. Thus, 
we indeed spend $k$ time units at $C$, and reach $E$ with 
$y=0$ and $x_1=5-\frac{1}{2^{c_1+1}}$. We then reach $l'$ with 
$y=1$ and $x_1=5-\frac{1}{2^{c_1+1}}$.

The decrement $c_1$ operation can be done by changing the rate 
-3 of $x_1$ at $C$ to $-12$.

Next, we will see the zero check instruction $l$: if $c_1=0$ goto $l'$ 
else goto $l''$. Figure \ref{z-c2} depicts the zero check widget.

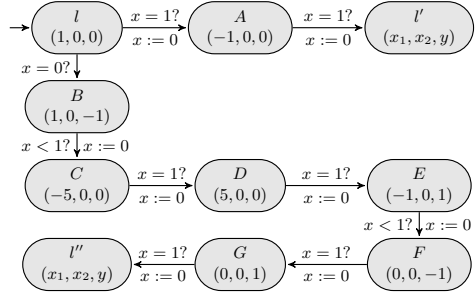
\begin{figure}[h]
\begin{center}
 \scalebox{0.7}{
\begin{tikzpicture}[->,>=stealth',shorten >=1pt,auto,node distance=1.8cm,
  semithick]
\tikzstyle{every state}=[fill=black!10!white,minimum size=3em,rounded rectangle]
 \node[initial,state,initial text={}] at (-1.5,0) (A1) {$\begin{array}{c}l \\ (1,0,0)\end{array}$} ;
  \node[state] at (1.6,0) (A) {$\begin{array}{c}A \\ (-1,0,0)\end{array}$} ;
   \node[state] at (5,0) (A2) {$\begin{array}{c}l' \\ (x_1,x_2,y)\end{array}$} ;
  \node[state] at (-1.5,-1.5) (E) {$\begin{array}{c}B \\(1,0,-1) \end{array}$} ;
    \node[state] at (-1.5,-3) (B) {$\begin{array}{c}C \\(-5,0,0) \end{array}$} ;
   \node[state] at (1.6,-3) (B2) {$\begin{array}{c}D \\ (5,0,0)\end{array}$} ;
   \node[state] at (5,-3) (C2) {$\begin{array}{c}E \\ (-1,0,1)\end{array}$} ;
    \node[state] at (5,-4.5) (F) {$\begin{array}{c}F \\ (0,0,-1)\end{array}$} ;
     \node[state] at (1.6,-4.5) (G) {$\begin{array}{c}G \\ (0,0,1)\end{array}$} ;
     \node[state] at (-1.5,-4.5) (H) {$\begin{array}{c}l''\\ (x_1,x_2,y)\end{array}$} ;
    \path (A1) edge node [above]{$x=1?$}
                         node [below]{$x:=0$}(A);
\path (A) edge node [above]{$x=1?$}
                         node [below]{$x:=0$}(A2);
   \path (A1) edge node [left]{$x=0?$}(E);
\path (E) edge node [left]{$x<1?$}
                         node [right]{$x:=0$}(B);
\path (B) edge node [above]{$x=1?$}
                         node [below]{$x:=0$}(B2);
\path (B2) edge node [above]{$x=1?$}
                         node [below]{$x:=0$}(C2);
\path (C2) edge node [left]{$x<1?$}
                         node [right]{$x:=0$}(F);
\path (F) edge node [above]{$x=1?$}
                         node [below]{$x:=0$}(G);
\path (G) edge node [above]{$x=1?$}
                         node [below]{$x:=0$}(H);
\end{tikzpicture}
}
\caption{Zero Check}
\label{z-c2}
\end{center}
\end{figure}
We start at mode $l$ with $y=1$ and $x_1=5-\frac{1}{2^{c_1}}$. Assume $c_1=0$. In this case, 
we cannot take the transition to $B$, since 
$x_1$ would attain a negative value on entering $D$. Thus, the only possibility is to reach $l'$ 
with values of $x_1,y$ unchanged. 

Now assume $c_1>0$. Then it is not possible to take the transition to $A$
since that would result in $x_1 >5$. Thus, we move to $B$, with values unchanged. 
At $B$, a time $\eta <1$ is elapsed, resulting in $x_1=5-\frac{1}{2^{c_1}}+\eta$ and $y=1-\eta$.
Clearly, $\eta \leq \frac{1}{2^{c_1}}$. If $\eta < \frac{1}{2^{c_1}}$, then on reaching $D$, 
we will obtain $x_1 <0$, thus $\eta=\frac{1}{2^{c_1}}$. Then, at $D$, we have 
$x_1=0$ and $y=1-\frac{1}{2^{c_1}}$. Next, $E$ is reached with $x_1=5$ and $y=1-\frac{1}{2^{c_1}}$.
At $E$, we spend a time $\kappa <1$, obtaining $x_1=5-\kappa$ and $y=1-\frac{1}{2^{c_1}}+\kappa$. 
If $\kappa\neq \frac{1}{2^{c_1}}$, then on reaching $G$, $y$ will violate its invariants; hence
at $G$ we have $x_1=5-\frac{1}{2^{c_1}}$ and $y=0$. We then reach $l''$ with $y=1$ and 
$x_1=5-\frac{1}{2^{c_1}}$.

It can be seen that the widgets for increment/decrement/zero check faithfully simulate 
the two counter machine. There is a mode HALT corresponding to 
the HALT instruction. Thus, the HALT mode can be reached iff the two counter machine halts. 


\end{document}